\newtheorem{Theorem}{Theorem}
\newtheorem{Lemma}{Lemma}
\newtheorem{Problem}{Problem}
\newtheorem{Remark}{Remark}
\newtheorem{Assumption}{Assumption}
	\tikzstyle{frame} = [draw, -latex]
	\tikzstyle{line} = [draw]
	\tikzstyle{line2} = [draw, dashdotted]
	\tikzstyle{line3} = [draw, dashed]
	\tikzstyle{line3UD} = [draw, dashed]
	\tikzstyle{place} = [circle, draw=black, fill=white, thick, inner sep=2pt, minimum size=1mm]
	\tikzstyle{place2} = [circle, draw=black, fill=black, thick, inner sep=2pt, minimum size=1mm]
	\tikzstyle{placeRed} = [circle, draw=red, fill=red, thick, inner sep=2pt, minimum size=1mm]
	\tikzstyle{vertex} = [circle, draw=black, fill=black, thick, inner sep=2pt, minimum size=1mm]
\def\algbackskip{\hskip-\ALG@thistlm}
\title{\LARGE \bf Distributed least square approach for solving a multiagent linear algebraic equation}
\author{Viet Hoang Pham$^{1}$ and Hyo-Sung Ahn$^{1}$ 
\thanks{\small $^{1}$School of Mechanical Engineering, Gwangju Institute of Science and Technology, Gwangju, Korea. E-mails: {vietph@gist.ac.kr}; {hyosung@gist.ac.kr.}}
}
\begin{document}
\maketitle 
\thispagestyle{empty}
\pagestyle{empty}
\begin{abstract}
This paper considers a linear algebraic equation over a multiagent network. The coefficient matrix is partitioned into multiple blocks; each agent only knows a subset of these blocks in different row and column partitions. Based on a proximal ADMM algorithm, we design a distributed method for every agent to find its corresponding parts in one least square solution of the considered linear algebraic equation. Each agent uses only its information and communicates with its neighbors.
We show that the designed method achieves an exponentially fast convergence for an arbitrarily initial setup.
Numerical simulations in MATLAB are provided to verify the effectiveness of the designed method.
\end{abstract}
\section{Introduction}
The control and optimization problems of multiagent systems have become popular in many applications. The main reason is inherent complexity of large-scale problems and memory dispersion with the growth of big data. Depending on the communication network among agents, a multiagent system can have a hierarchical or distributed scheme. In a hierarchical setup, one master agent plays as a coordinator. This agent communicates with and/or assigns tasks to all others. In a distributed setup, agents work in parallel, and each agent can exchange information with only its neighbors. Because of the scalability and privacy protection requirements, distributed ones have become more desired recently.

Many fundamental problems in various engineering applications can be deduced into solving a linear algebraic equation $\textbf{H}\textbf{z} = \textbf{h}$ where $\textbf{H}$ and $\textbf{h}$ are coefficient matrix and vector. Various distributed methods have been developed to find the solution of $\textbf{H}\textbf{z} = \textbf{h}$ when each agent knows only a subset of rows in $\textbf{H}$ and $\textbf{h}$. The first approach considers constrained consensus problems \cite{ShaoshuaiMou2015, ShaoshuaiMou2016, OnurCihan2019, JiLiu2017, GoudongShi2017, JiLiu2018, JiLiu2016}. By using an orthogonal projection of the kernel of the local coefficient matrix, the estimated solution at every iteration of each agent always satisfies its linear equations. Distributed consensus schemes are exploited to drive estimated solutions of all agents to converge to an agreement. The methods in \cite{ShaoshuaiMou2015, ShaoshuaiMou2016, OnurCihan2019, JiLiu2017, JiLiu2018, GoudongShi2017} are discrete-time, and the one in \cite{JiLiu2016} is continuous-time. The convergence of these methods is guaranteed if the considered linear algebraic equation has at least one exact solution.

Another approach considers the problem of solving a distributed linear algebraic equation as finding the optimal solution of a multiagent optimization problem \cite{GoudongShi2017, XuanWang2019, YangLiu2019, TaoYang2020, MohammadJahvani2022, PengWang2022}. The local cost function in each agent is the total square error of its local linear equations, and the constraints are added to force the agreement for estimated solutions among neighboring agents. Various distributed optimization algorithms, such as subgradient-based methods \cite{BahmanGharesifard2014, GuannanQu2017}, and ADMM \cite{BingshengHe2015, XiaoweiPan2022}, can be applied to solve the optimization problem. When the considered linear algebraic equation has one or several exact solutions, all agents can asymptotically determine one of such points \cite{PengWang2022}. If the exact solution does not exist, the final converged solution is one least square solution, which minimizes the total square errors, of the considered linear algebraic equation \cite{GoudongShi2017, XuanWang2019, YangLiu2019, TaoYang2020, MohammadJahvani2022}.

In the works mentioned above \cite{ShaoshuaiMou2015, ShaoshuaiMou2016, OnurCihan2019, JiLiu2017, GoudongShi2017, JiLiu2018, JiLiu2016, PengWang2022, XuanWang2019, YangLiu2019, TaoYang2020, MohammadJahvani2022}, the authors consider a distributed linear algebraic equation $\textbf{H}\textbf{z} = \textbf{h}$ assuming that the coefficient matrix $\textbf{H}$ is divided into multiple row partitions and each agent knows one of these partitions. Consequently, in their proposed algorithm, except for the one in \cite{ShaoshuaiMou2016}, every agent controls a state vector having the same dimension as the unknown vector $\textbf{z}$ in its update. In \cite{ShaoshuaiMou2016}, the dimension of the state vector controlled by one agent can be reduced when considering the sparsity in its row partition.
Different from \cite{ShaoshuaiMou2015, ShaoshuaiMou2016, OnurCihan2019, JiLiu2017, GoudongShi2017, JiLiu2018, JiLiu2016, PengWang2022, XuanWang2019, YangLiu2019, TaoYang2020, MohammadJahvani2022}, the authors in \cite{XuanWang2020} assume that the coefficient matrix $\textbf{H}$ is first divided into either multiple row partitions or column partitions. Then, each of these partitions is further divided into multiple blocks. Only one agent can access each block. These setups are more suitable than those in \cite{ShaoshuaiMou2015, ShaoshuaiMou2016, OnurCihan2019, JiLiu2017, GoudongShi2017, JiLiu2018, JiLiu2016, PengWang2022, XuanWang2019, YangLiu2019, TaoYang2020, MohammadJahvani2022} when the coefficient matrix $\textbf{H}$ has large rows and columns. However, the algorithms proposed in \cite{XuanWang2020} require a hierarchical structure. Each agent is in only one cluster corresponding to one row or column partition. Each cluster has one aggregator to collect information on agents in this cluster and exchange information with aggregators of others.

This paper considers a linear algebraic equation $\textbf{H}\textbf{z} = \textbf{h}$ when the rows and columns of the coefficient matrix $\textbf{H}$ are distributed over a network of agents. Let the matrix $\textbf{H}$ be divided into multiple blocks. Each agent only knows a subset of the divided blocks, which can be located in many row and column partitions.
Unlike \cite{XuanWang2020}, we do not require an aggregator to collect information from agents corresponding to the same row or column partition. For the row partitions whose data are dispensed across multiple agents, we introduce some virtual variables to transform their linear equations into local constraints of individual agents combining coupling constraints among neighboring agents. Consequently, we obtain a distributed optimization problem whose optimal solution is one least square solution of the considered linear algebraic equation. Then we apply a proximal ADMM algorithm \cite{BingshengHe2015} to design a distributed method for finding the optimal solution of the obtained distributed optimization problem. Each agent must only use its information and communicate with its neighbors.
Since the designed method is discrete-time, the asymptotic convergence of the ADMM-based method implies the exponential convergence of the estimated solution to the optimal one.
In addition, our designed method has no requirements for step-size selection or initial point setup.

The remainder of the paper is organized as follows. Section II provides some preliminaries for the analysis in later sections. In Section III, we introduce distributed linear algebraic equation setup and optimization-based approach. Then we formulate a distributed optimization problem whose optimal solution can be used to determine the least square solution of the considered linear algebraic equation in Section IV. Based on a proximal ADMM algorithm \cite{BingshengHe2015}, Section V designs a distributed method for solving the optimization problem formulated in Section IV. Numerical simulations are provided in Section VI to illustrate the effectiveness of the designed method. Section VII concludes this paper.
\section{Preliminaries}
\subsection{Notations}
We use $\mathbb{R}, \mathbb{R}^n$, and $\mathbb{R}^{m \times n}$ to denote the set of real numbers, the set of $n$-dimension real vectors, and the set of real matrices having $m$ rows and $n$ columns.
Denote by $\textbf{1}_n$ and $\textbf{0}_n$ the vectors in $\mathbb{R}^{n}$ whose all elements are $1$ and $0$, respectively.
Let $\textbf{I}_n$ be the identity matrix in $\mathbb{R}^{n \times n}$ and $\textbf{0}_{m \times n}$ be the zero matrix in $\mathbb{R}^{m \times n}$. When the dimensions are clear, the subscripts of the matrices $\textbf{I}_n, \textbf{0}_{m \times n}$, and the vector $\textbf{1}_n, \textbf{0}_n$ can be removed.
Denote by $\textbf{H}^T$ the transpose matrix of $\textbf{H}$.
We write $\textbf{H} \succ 0$ ($\succeq 0$) means $\textbf{H}$ is a positive (semi-)definite matrix.
Define $||\textbf{x}||_{\textbf{H}}^2 = \textbf{x}^T\textbf{H}\textbf{x}$. When $\textbf{H} = \textbf{I}$, we write it as $||\textbf{x}||^2$ for simplicity. If $x$ is a scalar, we write $|x| = ||x||$.
We use $|\mathbb{S}|$ to denote the cardinality of the set $\mathbb{S}$.

Let $\mathcal{H} = \left\{\textbf{h}_1, \textbf{h}_2, \dots, \textbf{h}_m\right\}$ be the list of $m$ vectors, we define the column vector
\[\textrm{col } \mathcal{H} = \textrm{col} \left\{\textbf{h}_1, \textbf{h}_2, \dots, \textbf{h}_m\right\} = \left[\textbf{h}_1^T, \textbf{h}_2^T, \dots, \textbf{h}_m^T\right]^T.\]
For a list of matrices, $\mathcal{H} = \left\{\textbf{H}_1, \textbf{H}_2, \dots, \textbf{H}_m\right\}$, we use $\textrm{blkdiag }\mathcal{H}$ to denote the block-diagonal matrix whose main blocks are matrices in the set $\mathcal{H}$. In addition, we define two matrices
\[\textrm{blkcol }\mathcal{H} = \textrm{blkcol}\left\{\textbf{H}_1, \textbf{H}_2, \dots, \textbf{H}_m\right\} = \left[\begin{matrix}\textbf{H}_1^T & \textbf{H}_2^T & \cdots & \textbf{H}_m^T\end{matrix}\right]^T,\]
\[\textrm{blkrow }\mathcal{H} = \textrm{blkrow}\left\{\textbf{H}_1, \textbf{H}_2, \dots, \textbf{H}_m\right\} = \left[\begin{matrix}\textbf{H}_1 & \textbf{H}_2 & \cdots & \textbf{H}_m\end{matrix}\right].\]
\subsection{Proximal ADMM Algorithm}
Consider the following convex optimization problem
\begin{equation}\label{eq_ADMMproblem}
\min\limits_{\textbf{x} \in \mathcal{X}, \textbf{y} \in \mathcal{Y}} \textrm{ }\Psi_x(\textbf{x}) + \Psi_y(\textbf{y}) \textrm{ s.t. } \textbf{F}_x\textbf{x} + \textbf{F}_y\textbf{y} = \textbf{f}.
\end{equation}
where $\Psi_x(\textbf{x}), \Psi_y(\textbf{y})$ are convex functions, $\mathcal{X}, \mathcal{Y}$ are convex sets, $\textbf{F}_x, \textbf{F}_y$ and $\textbf{f}$ are known matrices and vector.
The augmented Lagrangian function associated to the problem \eqref{eq_ADMMproblem} is defined as \[\mathcal{L}_{\rho}(\textbf{x},\textbf{y},\boldsymbol{\lambda}) = \Psi_x(\textbf{x}) + \Psi_y(\textbf{y}) + \frac{\rho}{2}\Bigl|\Bigl|\textbf{F}_x\textbf{x} + \textbf{F}_y\textbf{y} - \textbf{f} - \frac{1}{\rho}\boldsymbol{\lambda}\Bigr|\Bigr|^2\] where $\boldsymbol{\lambda}$ is the dual variable corresponding to the equality constraint in \eqref{eq_ADMMproblem} and $\rho > 0$ is a positive penalty parameter.
The proximal ADMM algorithm proposed in \cite{BingshengHe2015} for solving the convex optimization problem \eqref{eq_ADMMproblem} is given as follows.
\begin{subequations}\label{eq_ADMM}
\begin{align}
\textbf{x}(s+1) &= \arg\min\limits_{\textbf{x} \in \mathcal{X}} \Bigl\{\mathcal{L}_{\rho}(\textbf{x},\textbf{y}(s),\boldsymbol{\lambda}(s)) + \frac{1}{2}\left|\left|\textbf{x} - \textbf{x}(s)\right|\right|_{\textbf{G}}^2 \Bigr\},\\
\textbf{y}(s+1) &= \arg\min\limits_{\textbf{y} \in \mathcal{Y}} \mathcal{L}_{\rho}(\textbf{x}(s+1),\textbf{y},\boldsymbol{\lambda}(s)),\\
\boldsymbol{\lambda}(s+1) &= \boldsymbol{\lambda}(s) - \rho \Bigl(\textbf{F}_x\textbf{x}(s+1) + \textbf{F}_y\textbf{y}(s+1) - \textbf{f}\Bigr),
\end{align}
\end{subequations}
where $\textbf{G}$ is a symmetric and positive semidefinite matrix, and the initial point $(\textbf{x}(0), \textbf{y}(0), \boldsymbol{\lambda}(0))$ can be chosen arbitrarily such that $\textbf{x}(0) \in \mathcal{X}, \textbf{y}(0) \in \mathcal{Y}$.
The following theorem is Lemma 1 in \cite{XiaoweiPan2022}, which is based on the convergence result of the proximal ADMM algorithm \eqref{eq_ADMM} in \cite{BingshengHe2015}.
\begin{Theorem}\label{th_ADMM}
Let $\boldsymbol{\Theta} = blkdiag\left\{ \textbf{G}, \rho\textbf{F}_y^T\textbf{F}_y, \frac{1}{\rho}\textbf{I} \right\}$ and $\textbf{w}(s) = [\textbf{x}(s)^T, \textbf{y}(s)^T, \boldsymbol{\lambda}(s)^T]^T$. Then we have
\begin{enumerate}
\item $||\textbf{w}(s) - \textbf{w}(s+1)||_{\boldsymbol{\Theta}}^2 = o\left(\frac{1}{s}\right)$
\item $|| \textbf{F}_x\textbf{x}(s) + \textbf{F}_y\textbf{y}(s) - \textbf{f}|| = o\left(\frac{1}{\sqrt{s}}\right)$
\item $|\Psi_x(\textbf{x}(s)) + \Psi_y(\textbf{y}(s)) - \psi^{opt}| = o\left(\frac{1}{\sqrt{s}}\right)$ where $\psi^{opt}$ is the optimal value of the cost function in \eqref{eq_ADMMproblem}.
\end{enumerate}
\end{Theorem}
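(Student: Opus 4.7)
The plan is to follow the monotone variational inequality framework for proximal ADMM from \cite{BingshengHe2015}. I would first write the KKT conditions for the $\textbf{x}$- and $\textbf{y}$-subproblems of \eqref{eq_ADMM}, combine them with the explicit dual update in (\ref{eq_ADMM}c), and recast one iteration as the variational inequality
\[
\Psi_x(\textbf{x}) + \Psi_y(\textbf{y}) - \Psi_x(\textbf{x}(s+1)) - \Psi_y(\textbf{y}(s+1)) + (\textbf{w} - \textbf{w}(s+1))^T \mathcal{F}(\textbf{w}(s+1)) \;\geq\; (\textbf{w} - \textbf{w}(s+1))^T \textbf{M}\,(\textbf{w}(s) - \textbf{w}(s+1)),
\]
valid for every feasible $\textbf{w} = (\textbf{x},\textbf{y},\boldsymbol{\lambda})$, where $\mathcal{F}$ is the monotone affine map associated with the KKT system of \eqref{eq_ADMMproblem} and $\textbf{M}$ is an asymmetric block matrix assembled from $\textbf{G}$, $\textbf{F}_x$, $\textbf{F}_y$, and $\rho$. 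A short calculation then shows that the symmetric part $\tfrac{1}{2}(\textbf{M}+\textbf{M}^T) = \boldsymbol{\Theta}$, which is exactly the block-diagonal weight defined in the statement.

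Substituting a primal--dual optimal point $\textbf{w}^{\ast}$ for $\textbf{w}$ and exploiting monotonicity of $\mathcal{F}$ together with convexity of $\Psi_x+\Psi_y$, I would derive the fundamental contraction
\[
\|\textbf{w}(s)-\textbf{w}^{\ast}\|_{\boldsymbol{\Theta}}^2 - \|\textbf{w}(s+1)-\textbf{w}^{\ast}\|_{\boldsymbol{\Theta}}^2 \;\geq\; \|\textbf{w}(s)-\textbf{w}(s+1)\|_{\boldsymbol{\Theta}}^2 \;\geq\; 0.
\]
Telescoping over $s$ immediately gives $\sum_{s=0}^{\infty} \|\textbf{w}(s)-\textbf{w}(s+1)\|_{\boldsymbol{\Theta}}^2 < \infty$, so the increments are at least $O(1/s)$ on average.

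The main obstacle is upgrading this summability to the pointwise $o(1/s)$ rate claimed in item~1. I would do so by proving that the sequence $\{\|\textbf{w}(s)-\textbf{w}(s+1)\|_{\boldsymbol{\Theta}}^2\}$ is \emph{monotonically non-increasing}: plugging $\textbf{w}=\textbf{w}(s)$ into the variational inequality evaluated at iteration $s+1$ and $\textbf{w}=\textbf{w}(s+1)$ into the one at iteration $s$, adding, and carefully handling the asymmetric cross-terms generated by $\textbf{M}$ eventually produces $\|\textbf{w}(s)-\textbf{w}(s+1)\|_{\boldsymbol{\Theta}}^2 \geq \|\textbf{w}(s+1)-\textbf{w}(s+2)\|_{\boldsymbol{\Theta}}^2$. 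Monotonicity combined with summability then forces $s \cdot \|\textbf{w}(s)-\textbf{w}(s+1)\|_{\boldsymbol{\Theta}}^2 \to 0$ by a standard Kronecker-type argument, which is precisely item~1. The bookkeeping for those cross-terms, which mix the $\textbf{G}$-, $\textbf{F}_y^T\textbf{F}_y$-, and $\tfrac{1}{\rho}\textbf{I}$-blocks with the off-diagonal pieces of $\textbf{M}$, is where I expect the bulk of the technical work to lie.

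Items~2 and~3 then follow quickly from item~1. The dual update gives $\textbf{F}_x\textbf{x}(s)+\textbf{F}_y\textbf{y}(s)-\textbf{f} = -\tfrac{1}{\rho}(\boldsymbol{\lambda}(s)-\boldsymbol{\lambda}(s-1))$, and the $\tfrac{1}{\rho}\textbf{I}$ block in $\boldsymbol{\Theta}$ yields $\|\boldsymbol{\lambda}(s)-\boldsymbol{\lambda}(s-1)\|^2 \leq \rho\,\|\textbf{w}(s-1)-\textbf{w}(s)\|_{\boldsymbol{\Theta}}^2 = o(1/s)$, which is item~2. For item~3, setting $\textbf{w}=\textbf{w}^{\ast}$ in the variational inequality and invoking the saddle-point property of $(\textbf{x}^{\ast},\textbf{y}^{\ast},\boldsymbol{\lambda}^{\ast})$ yields a lower bound on $\Psi_x(\textbf{x}(s))+\Psi_y(\textbf{y}(s))-\psi^{opt}$ of order $\|\textbf{w}(s-1)-\textbf{w}(s)\|_{\boldsymbol{\Theta}}$; a matching upper bound comes from convexity of the objective together with a Cauchy--Schwarz estimate on $(\boldsymbol{\lambda}^{\ast})^T(\textbf{F}_x\textbf{x}(s)+\textbf{F}_y\textbf{y}(s)-\textbf{f})$, and both collapse to the $o(1/\sqrt{s})$ rate via items~1 and~2.
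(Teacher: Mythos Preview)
The paper does not actually prove this theorem: it states explicitly that ``The following theorem is Lemma~1 in \cite{XiaoweiPan2022}, which is based on the convergence result of the proximal ADMM algorithm \eqref{eq_ADMM} in \cite{BingshengHe2015},'' and then invokes it as a black box. So there is no in-paper proof to compare against.

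Your sketch is a faithful reconstruction of precisely the argument in \cite{BingshengHe2015} that the paper is citing: the variational-inequality reformulation with the asymmetric correction matrix whose symmetric part is $\boldsymbol{\Theta}$, the Fej\'er contraction in the $\boldsymbol{\Theta}$-seminorm, summability plus monotone nonincrease of the successive differences to obtain $o(1/s)$, and then reading off items~2 and~3 from the $\boldsymbol{\lambda}$-block and the saddle-point inequality. The places you flag as ``where the bulk of the technical work lies'' (the cross-term bookkeeping for monotonicity of $\|\textbf{w}(s)-\textbf{w}(s+1)\|_{\boldsymbol{\Theta}}^2$) are indeed the nontrivial steps in He--Yuan, but they go through exactly as you outline. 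One small cosmetic point: in your derivation of item~2 the sign of $\boldsymbol{\lambda}(s)-\boldsymbol{\lambda}(s-1)$ is flipped relative to (\ref{eq_ADMM}c), but this is immaterial for the norm bound.
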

\section{Problem Formulation}
\subsection{Distributed linear algebraic equation}
Consider a linear algebraic equation
\begin{equation}\label{eq_LE}
\textbf{H}\textbf{z} = \textbf{h},
\end{equation}
where $\textbf{z} \in \mathbb{R}^{n}$ is unknown vector, $\textbf{H} \in \mathbb{R}^{m \times n}$ and $\textbf{h} \in \mathbb{R}^{m}$ are coefficients matrix and vector, respectively.
Let $\textbf{z}^{opt}$ be one least square solution to \eqref{eq_LE}. Then $\textbf{z}^{opt}$ is one optimal solution to the following optimization problem.
\begin{equation}\label{eq_LE_solution}
\min_{\textbf{z}} \frac{1}{2}\bigl|\bigl|\textbf{H}\textbf{z} - \textbf{h}\bigr|\bigr|^2.
\end{equation}
In general, the linear algebraic equation \eqref{eq_LE} can have one or more least square solutions.
Define $\Psi^{opt} = \frac{1}{2}||\textbf{H}\textbf{z}^{opt} - \textbf{h}||^2$.
We have $\Psi^{opt} = 0$ if the linear algebraic equation \eqref{eq_LE} has at least one exact solution. If $\textbf{H}$ is a full column rank matrix, we have $\textbf{z}^{opt} = (\textbf{H}^T\textbf{H})^{-1}\textbf{H}^T\textbf{h}$ is unique and $\Psi^{opt} = \frac{1}{2}\textbf{h}^T\textbf{h} - \frac{1}{2}\textbf{h}^T\textbf{H}(\textbf{H}^T\textbf{H})^{-1}\textbf{H}^T\textbf{h}$.

In this paper, we are interested in determining one least square solution to the linear algebraic equation \eqref{eq_LE} when the coefficient matrix $\textbf{H}$ can be subdivided into multiple blocks $\textbf{H}_{kl}$, where $1 \le k \le K$ and $1 \le l \le L$, as in \eqref{eq_LE_partitions} and each nonzero block is known by only one agent in a network of $N$ agents.
\begin{equation}\label{eq_LE_partitions}
\textbf{H} = \left[\begin{matrix}
\textbf{H}_{11} & \textbf{H}_{12} & \cdots & \textbf{H}_{1L}\\
\textbf{H}_{21} & \textbf{H}_{22} & \cdots & \textbf{H}_{2L}\\
\vdots & \vdots & \ddots & \vdots\\
\textbf{H}_{K1} & \textbf{H}_{K2} & \cdots & \textbf{H}_{KL}
\end{matrix}\right].
\end{equation}
The coefficient matrix $\textbf{H}$ in \eqref{eq_LE_partitions} includes $KL$ blocks corresponding to $K$ row and $L$ column partitions.
As each agent $i$, where $1 \le i \le N$, knows some nonzero blocks, we have $KL \ge N$ in general setting. We define $\mathcal{A}[\textbf{H}_{kl}]$ as an operator to express the agent knowing the block $\textbf{H}_{kl}$. Since there may be some zero blocks in $\textbf{H}$, we conventionally define $\mathcal{A}[\textbf{H}_{kl}] = 0$ if $\textbf{H}_{kl} = \textbf{0}$. Note that $0$ is not an agent.
For each agent $i$, we define the relation between the block $\textbf{H}_{kl}$ and this agent by $\delta_i[\textbf{H}_{kl}]$ where $\delta_i[\textbf{H}_{kl}] = 1$ if $\mathcal{A}[\textbf{H}_{kl}] = i$ and $\delta_i[\textbf{H}_{kl}] = 0$, otherwise.
Let $\mathcal{B}_k^{(i)} = \{\textbf{H}_{kl}: \delta_i[\textbf{H}_{kl}] = 1, \forall 1 \le l \le L\}$ be the set of blocks in the row partition $k$ known by the agent $i$, where $1 \le k \le K$ and $1 \le i \le N$.
 
For each row partition $k$, where $1 \le k \le K$, of the coefficient matrix $\textbf{H}$, we define the index list of agents that have information of blocks in this row partition as $\mathcal{R}_k = \{\mathcal{A}[\textbf{H}_{k1}], \mathcal{A}[\textbf{H}_{k2}], \dots, \mathcal{A}[\textbf{H}_{kL}]\}$. Let $\mathcal{C}^l$ be the index list of agents that have information of blocks in the column partition $l$, i.e., $\mathcal{C}^l = \{\mathcal{A}[\textbf{H}_{1l}], \mathcal{A}[\textbf{H}_{2l}], \dots, \mathcal{A}[\textbf{H}_{Kl}]\}$, where $1 \le l \le L$.
As $\mathcal{R}_k$ and $\mathcal{C}^l$, where $1 \le k \le K$ and $1 \le l \le L$, are lists, they can contain duplicated indexes. We define the set $\mathbb{S}(\mathcal{R}_k)$ derived from the list $\mathcal{R}_k$ by removing the index $0$ and merging duplicated indexes into one element. Similarly, we define $\mathbb{S}(\mathcal{C}^l)$ for the list $\mathcal{C}^l$ where $1 \le l \le L$.
We can obtain an equivalent linear algebraic equation to \eqref{eq_LE} if we exchange rows in the matrix $\textbf{H}$ and the vector $\textbf{h}$. So, the rows in $\textbf{H}$ can be interchanged such that the ones dispensed across the same list of agents are grouped in the same row partition.
Without loss of generality, we assume that $\mathcal{R}_k \neq \mathcal{R}_{k'}$ if $k \neq k'$ for all $ 1 \le k,k' \le K$.
\subsection{Optimization approach}
Denote by $m_k, n_l$ the dimensions of the block $\textbf{H}_{kl} \in \mathbb{R}^{m_k \times n_l}$, where $1 \le k \le K$ and $1 \le l \le L$. We have $m = \sum_{k = 1}^{K}m_k$ and $n = \sum_{l = 1}^{L}n_l$.
According to the partition of the coefficient matrix $\textbf{H}$, the unknown vector $\textbf{z}$ and the coefficient vector $\textbf{h}$ can be partitioned as $\textbf{z} = \textrm{col}\{\textbf{z}_{1}, \textbf{z}_{2}, \dots, \textbf{z}_{L}\}$ and $\textbf{h} = \textrm{col}\{\textbf{h}_{1}, \textbf{h}_{2}, \dots, \textbf{h}_{K}\}$ where
\begin{align*}
\textbf{z}_{l} &= \textrm{blkrow}\bigl\{\textbf{0}_{n_l \times \sum_{j = 0}^{l-1}n_j}, \textbf{I}_{n_l}, \textbf{0}_{n_l \times \bigl(n-\sum_{j = 0}^{l}n_j\bigr)}\bigr\}\textbf{z},\\
\textbf{h}_{k} &= \textrm{blkrow}\bigl\{\textbf{0}_{m_k \times \sum_{j = 0}^{k-1}m_j}, \textbf{I}_{m_k}, \textbf{0}_{m_k \times \bigl(m-\sum_{j = 0}^{k}m_j\bigr)}\bigr\}\textbf{h}.
\end{align*}
Here we set $m_0 = n_0 = 0$ conventionally.
Assume that each agent $i \in \mathbb{S}(\mathcal{C}^l)$ has a copy $\textbf{z}_l^{(i)}$ for the part $\textbf{z}_l$ of the unknown vector $\textbf{z}$. We have
\begin{equation}\label{eq_converted_temp}
\sum_{l = 1}^{L} \textbf{H}_{kl} \textbf{z}_l = \sum_{l = 1, \textbf{H}_{kl} \neq \textbf{0}}^{L} \textbf{H}_{kl} \textbf{z}_l^{(\mathcal{A}[\textbf{H}_{kl}])}, \forall 1 \le k \le K.
\end{equation}
If $\mathcal{A}[\textbf{H}_{kl}] \neq 0$, the vector $\textbf{H}_{kl} \textbf{z}_l^{\mathcal{A}[\textbf{H}_{kl}]}$ is private information of the agent $\mathcal{A}[\textbf{H}_{kl}]$.
To rewrite the error of the linear equations corresponding to the row partition $k$ as the sum of local private functions, we make the following assumption:
\begin{Assumption}\label{aspt_partitioned_data}
For every row partition $k$, where $1 \le k \le K$, each agent $i \in \mathbb{S}(\mathcal{R}_k)$ can access to a part of $\textbf{h}_k$ (i.e., can have a part of information of $\textbf{h}_k$). Let us denote the part of information as $\textbf{h}_{i,k}$. However, the combination of $\textbf{h}_{i,k}$ is equivalent to $\textbf{h}_k$ such that $\sum\limits_{i \in \mathbb{S}(\mathcal{R}_k)} \textbf{h}_{i,k} = \textbf{h}_k$.
\end{Assumption}
Under Assumption \ref{aspt_partitioned_data}, we have
\begin{equation}\label{eq_aspt_coupled_equations}
\sum_{l = 1}^{L} \textbf{H}_{kl} \textbf{z}_l - \textbf{h}_k = \sum_{i \in \mathbb{S}(\mathcal{R}_k)}\Bigl(\sum_{\textbf{H}_{kl} \in \mathcal{B}_k^{(i)}} \textbf{H}_{kl}\textbf{z}_l^{(i)} - \textbf{h}_{i,k}\Bigr),
\end{equation}
for all $1 \le k \le K$.
So, we have an equivalent problem of the optimization problem \eqref{eq_LE_solution} as follows.
\begin{subequations}\label{eq_LE_optimizationproblem}
\begin{align}
&\min_{\textbf{z}_l^{(i)}} \frac{1}{2} \sum_{k = 1}^{K} \left|\left|\sum_{i \in \mathbb{S}(\mathcal{R}_k)}\left(\sum_{\textbf{H}_{kl} \in \mathcal{B}_k^{(i)}} \textbf{H}_{kl}\textbf{z}_l^{(i)} - \textbf{h}_{i,k}\right)\right|\right|^2\\
&\textrm{s.t. } \textbf{z}_l^{(i)} = \textbf{z}_l^{(j)}, \forall i,j \in \mathbb{S}(\mathcal{C}^l), i \neq j, \forall 1 \le l \le L. 
\end{align}
\end{subequations}
It is clear that the optimal cost value $\Phi^{opt} = \frac{1}{2}||\textbf{H}\textbf{z}^{opt} - \textbf{h}||^2$ is also the optimal value for the cost function in the problem \eqref{eq_LE_optimizationproblem}. In addition, each optimal solution to the optimization problem \eqref{eq_LE_optimizationproblem} can be used to determine one least square solution \eqref{eq_LE_solution} of the linear algebraic equation \eqref{eq_LE}.
\subsection{Communication graph}
In this paper, we use an undirected graph $\mathcal{G} = (\mathcal{V}, \mathcal{E})$ to illustrate the communication network among agents. In which, $\mathcal{V} = \{1, 2, \dots\}$ is the node set and $\mathcal{E} \subset \mathcal{V} \times \mathcal{V}$ is the edge set.
Each node $i \in \mathcal{V} = \{1, 2, \dots, N\}$ corresponds to an agent while an edge $(i,j) \in \mathcal{E}$ represents that the agent $i$ can exchange information with the agent $j$.
As $\mathcal{G}$ is an undirected graph, $(i,j) \in \mathcal{E}$ implies $(j,i) \in \mathcal{E}$ for every pair of nodes $i,j \in \mathcal{V}$.
The neighbor set of the agent $i \in \mathcal{V}$ is denoted by $\mathcal{N}_i = \left\{j \in \mathcal{V}: (i,j) \in \mathcal{E}\right\}$.
Since agents need to cooperate to find the optimal solution of the problem \eqref{eq_LE_optimizationproblem}, the communication graph $\mathcal{G}$ is required to be connected. That means there is at least one path from $i$ to $j$ for any pair $i,j \in \mathcal{V}$, i.e., there is a sequence of edges $(i_1,i_2), (i_2,i_3), \dots, (i_{k-1},i_k)$ such that $i_1 = i, i_k = j$ and $i_l \in \mathcal{V}, (i_{l-1},i_l) \in \mathcal{E}, \forall 2 \le l \le k$.

Let $\tilde{\mathcal{V}}$ be a node subset, i.e., $\tilde{\mathcal{V}} \subset \mathcal{V}$. The subgraph induced by $\tilde{\mathcal{V}}$ is defined as $\tilde{\mathcal{G}} = (\tilde{\mathcal{V}}, \tilde{\mathcal{E}})$ where $\tilde{\mathcal{E}} = \{(i,j) \in \mathcal{E}: i,j \in \tilde{\mathcal{V}}\}$.
We define $\mathcal{G}_{k} = (\mathbb{S}(\mathcal{R}_k), \mathcal{E}_{k})$ as the subgraph induced by the node set $\mathbb{S}(\mathcal{R}_k)$ for each row partition $k$, where $1 \le k \le K$. 
Similarly, we define $\mathcal{G}^{l} = (\mathbb{S}(\mathcal{C}^{l}), \mathcal{E}^{l})$ for each column partition $l$, where $1 \le l \le L$.
The following assumption is necessary.
\begin{Assumption}\label{aspt_topology}
The communication graph satisfies that
\begin{enumerate}
\item $\mathcal{G}^{l}$ is a connected graph for all $1 \le l \le L$,
\item $\mathcal{G}_{k}$ is a connected graph for all $1 \le k \le K$.
\end{enumerate}
\end{Assumption}
The first item in Assumption \ref{aspt_topology} is to allow agents in the set $\mathbb{S}(\mathcal{C}^{l})$ can cooperatively achieve an agreement on the part $\textbf{z}_l$ of the unknown vector $\textbf{z}$. Under this assumption, the constraints in (\ref{eq_LE_optimizationproblem}b) can be converted into coupling constraints among neighboring agents as follows.
\begin{equation}\label{eq_coupling_commonvariable_temp}
\textbf{z}_l^{(i)} = \textbf{z}_l^{(j)}, \forall j \in \mathcal{N}_i \cap \mathbb{S}(\mathcal{C}^l), \forall i \in \mathbb{S}(\mathcal{C}^l),
\end{equation}
for all $1 \le l \le L$.
When $i \in \mathbb{S}(\mathcal{R}_k)$, the agent $i$ takes part in the linear equations corresponding to the row partition $k$ as shown in \eqref{eq_aspt_coupled_equations}.
In general, there may be some row partitions $1 \le k \le K$ where $\mathbb{S}(\mathcal{R}_k) \neq \mathcal{N}_i$ for all $i \in \mathbb{S}(\mathcal{R}_k)$.
Since communication is only available between two neighboring agents, it is difficult for one agent to collect complete information on the linear equations corresponding to these row partitions. The second item in Assumption \ref{aspt_topology} facilitates a transformation of these linear equations into local linear equations of individual agents combining coupling constraints among neighboring agents, as presented later in Section III.
Summarizing this section, we state the main problem of this paper as follows.
\begin{Problem}
Assume that the communication graph of the considered multiagent network satisfies Assumption \ref{aspt_topology}.
Design a distributed method for each agent to find its corresponding parts in one optimal solution of the constrained optimization problem \eqref{eq_LE_optimizationproblem} while using only its information and communicating with its neighbors in $\mathcal{N}_i$.
\end{Problem}
\section{Distributed optimization reformulation}
Let $\bar{\textbf{z}}_i = \textrm{col}\{\textbf{z}_l^{(i)}: i \in \mathbb{S}(\mathcal{C}^l), \forall 1 \le l \le L\}$ be the vector of local variables of the agent $i \in \mathcal{V}$. For each neighboring agent $j \in \mathcal{N}_i$, define the matrix $\textbf{P}_{ij}$ such that $\textbf{P}_{ij}\bar{\textbf{z}}_i = \textrm{col}\{\textbf{z}_l^{(i)}: i,j \in \mathbb{S}(\mathcal{C}^l), \forall 1 \le l \le L\}$. The constraints in \eqref{eq_coupling_commonvariable_temp} are equivalent to the following one:
\begin{equation}\label{eq_coupling_commonvariable}
\textbf{P}_{ij}\bar{\textbf{z}}_i = \textbf{P}_{ji}\bar{\textbf{z}}_j, \forall (i,j) \in \mathcal{E}.
\end{equation}
Let $\Psi(\bar{\textbf{z}}_1, \bar{\textbf{z}}_2, \dots, \bar{\textbf{z}}_N)$ be the cost function in (\ref{eq_LE_optimizationproblem}a). The optimization problem \eqref{eq_LE_optimizationproblem} can be rewritten as
\begin{equation}\label{eq_LE_optimizationproblem_rewritten}
\min_{\bar{\textbf{z}}_i, \forall i \in \mathcal{V}} \Psi(\bar{\textbf{z}}_1, \bar{\textbf{z}}_2, \dots, \bar{\textbf{z}}_N) \textrm{ s.t. } \eqref{eq_coupling_commonvariable}.
\end{equation}
Consider a row partition $k$ where there is only one agent $i$ in the set $\mathbb{S}(\mathcal{R}_k)$, i.e., $\mathbb{S}(\mathcal{R}_k) = \{i\}$. Under Assumption \ref{aspt_partitioned_data}, the agent $i$ knows the vector $\textbf{h}_k$. In this case, we have
\begin{equation}\label{eq_local_equation_full}
\sum_{l = 1}^{L} \textbf{H}_{kl} \textbf{z}_l - \textbf{h}_k = \bar{\textbf{A}}_{i,k}\bar{\textbf{z}}_i - \bar{\textbf{a}}_{i,k}, \textrm{ if } \mathbb{S}(\mathcal{R}_k)=\{i\},
\end{equation}
where $\bar{\textbf{a}}_{i,k} = \textbf{h}_k$ and $\bar{\textbf{A}}_{i,k} = \textrm{blkrow}\Bigl\{\textbf{H}_{kl}\delta_i[\textbf{H}_{kl}]: i \in \mathbb{S}(\mathcal{R}_k) \cap \mathbb{S}(\mathcal{C}^l), \forall 1 \le l \le L\Bigr\}$.
Define by $\mathbb{M} = \{\epsilon_1, \epsilon_2, \dots, \epsilon_M\}$ the set of row partitions whose corresponding sets consist of two or more agents, i.e., $|\mathbb{S}(\mathcal{R}_{\epsilon_k})| \ge 2$ for all $\epsilon_k \in \mathbb{M}$.
For each row partition $\epsilon_k \in \mathbb{M}$ and agent $i \in \mathcal{V}$, define the matrix $\bar{\textbf{B}}_{i,\epsilon_k}$ by $\bar{\textbf{B}}_{i,\epsilon_k} = \textrm{blkrow}\Bigl\{\textbf{H}_{\epsilon_kl}\delta_i[\textbf{H}_{\epsilon_kl}]: i \in \mathbb{S}(\mathcal{R}_{\epsilon_k}) \cap \mathbb{S}(\mathcal{C}^l), \forall \le l \le L\Bigr\}$ and $\bar{\textbf{b}}_{i,\epsilon_k} = \textbf{h}_{i,\epsilon_k}$.
We have
\begin{equation}\label{eq_coupling_equation}
\sum_{l = 1}^{L} \textbf{H}_{\epsilon_kl} \textbf{z}_l - \textbf{h}_{\epsilon_k} = \sum_{i \in \mathbb{S}(\mathcal{R}_{\epsilon_k})} \left(\bar{\textbf{B}}_{i,\epsilon_k}\bar{\textbf{z}}_i - \bar{\textbf{b}}_{i,\epsilon_k}\right).
\end{equation}
Let $\textbf{u}_{\epsilon_k}$ be the error of the coupling linear equations in the row partition $\epsilon_k \in \mathbb{M}$, i.e.,
\begin{equation}\label{eq_LE_temp}
\textbf{u}_{\epsilon_k} = \sum_{i \in \mathbb{S}(\mathcal{R}_{\epsilon_k})} \left(\bar{\textbf{B}}_{i,\epsilon_k}\bar{\textbf{z}}_i - \bar{\textbf{b}}_{i,\epsilon_k}\right), \forall \epsilon_k \in \mathbb{M}.
\end{equation}
For every $\epsilon_k \in \mathbb{M}$, we define the vector $\textbf{v}_{ij, \epsilon_k}$ for each edge $(i,j) \in \mathcal{E}_{\epsilon_k}$ such that
\begin{equation}\label{eq_LE_virtual_var}
\frac{1}{|\mathbb{S}(\mathcal{R}_{\epsilon_k})|}\textbf{u}_{\epsilon_k} = \bar{\textbf{B}}_{i,\epsilon_k}\bar{\textbf{z}}_i - \bar{\textbf{b}}_{i,\epsilon_k} + \sum_{j \in \mathcal{N}_{i,\epsilon_k}} \left(\textbf{v}_{ij, \epsilon_k} - \textbf{v}_{ji, \epsilon_k}\right),
\end{equation}
for all $i \in \mathbb{S}(\mathcal{R}_{\epsilon_k})$, where $\mathcal{N}_{i,\epsilon_k} \equiv \mathcal{N}_i \cap \mathbb{S}(\mathcal{R}_{\epsilon_k})$.
We remark here that there always exists a set $\{\textbf{v}_{ij, \epsilon_k}: (i,j) \in \mathcal{E}_{\epsilon_k}\}$ that satisfying \eqref{eq_LE_virtual_var}.
Indeed, from \eqref{eq_LE_virtual_var}, we have
\begin{equation}\label{eq_LE_temp_virtual_var}
\sum\limits_{j \in \mathcal{N}_{i,\epsilon_k}} \left(\textbf{v}_{ij, \epsilon_k} - \textbf{v}_{ji, \epsilon_k}\right) = \frac{1}{|\mathbb{S}(\mathcal{R}_{\epsilon_k})|}\textbf{u}_{\epsilon_k} - \bar{\textbf{B}}_{i,\epsilon_k}\bar{\textbf{z}}_i + \bar{\textbf{b}}_{i,\epsilon_k},
\end{equation}
for all $i \in \mathbb{S}(\mathcal{R}_{\epsilon_k})$.
Consider \eqref{eq_LE_temp_virtual_var} as a system of linear equations with $\{\textbf{v}_{ij,\epsilon_k}: (i,j) \in \mathcal{E}_{\epsilon_k}\}$ is the set of unknown variables. It is clear that there are $|\mathbb{S}(\mathcal{R}_{\epsilon_k})|$ linear equations and $2|\mathcal{E}_{\epsilon_k}|$ variables. Since $\mathcal{G}_{\epsilon_k} = (\mathbb{S}(\mathcal{R}_{\epsilon_k}), \mathcal{E}_{\epsilon_k})$ is a connected graph, we have $|\mathcal{E}_{\epsilon_k}| \ge |\mathbb{S}(\mathcal{R}_{\epsilon_k})|$. This implies that the linear algebraic equation  \eqref{eq_LE_temp_virtual_var} has infinite roots.

Let $\textbf{v}_{ij, \epsilon_k}^{(i)}$ be the copy of $\textbf{v}_{ij, \epsilon_k} - \textbf{v}_{ji, \epsilon_k}, \forall (i,j) \in \mathcal{E}_{\epsilon_k},$ in the agent $i \in \mathbb{S}(\mathcal{R}_{\epsilon_k})$.
Under the second item in Assumtion \ref{aspt_topology}, the coupling linear equations in \eqref{eq_LE_temp}, we have $\frac{1}{|\mathbb{S}(\mathcal{R}_{\epsilon_k})|}\textbf{u}_{\epsilon_k} = \bar{\textbf{B}}_{i,\epsilon_k}\bar{\textbf{z}}_i + \sum_{j \in \mathcal{N}_{i,\epsilon_k}} \textbf{v}_{ij, \epsilon_k}^{(i)}  - \bar{\textbf{b}}_{i,\epsilon_k}$ if the following constraints are satisfied for all $j \in \mathcal{N}_{i,\epsilon_k}$ where $i \in \mathbb{S}(\mathcal{R}_{\epsilon_k})$.
\begin{subequations}\label{eq_LE_virtual_constraint}
\begin{gather}
\sum\limits_{l \in \mathcal{N}_{i,\epsilon_k}} \textbf{v}_{il, \epsilon_k}^{(i)} + \bar{\textbf{B}}_{i,\epsilon_k}\bar{\textbf{z}}_i - \bar{\textbf{b}}_{i,\epsilon_k} =  \sum\limits_{l \in \mathcal{N}_{j,\epsilon_k}} \textbf{v}_{jl, \epsilon_k}^{(j)} + \bar{\textbf{B}}_{j,\epsilon_k}\bar{\textbf{z}}_j - \bar{\textbf{b}}_{j,\epsilon_k},\\
\textbf{v}_{ij,\epsilon_k}^{(i)} + \textbf{v}_{ij,\epsilon_k}^{(j)} = \textbf{0}.
\end{gather}
\end{subequations}
Let $\mathcal{M}_i = \{\epsilon_k \in \mathbb{M}: i \in \mathbb{S}(\mathcal{R}_{\epsilon_k})\}$ for each agent $i \in \mathcal{V}$.
Define the stacked vector $\bar{\textbf{v}}_i = \textrm{col}\Bigl\{ \textrm{col}\Bigl\{\textbf{v}_{ij,\epsilon_k}^{(i)}: j \in \mathcal{N}_{i,\epsilon_k}\Bigr\}: \epsilon_k \in \mathcal{M}_i\Bigr\}$.
In addition, we define the stacked matrix $\textbf{A}_i$ and the stacked vector $\textbf{a}_i$ for each agent $i \in \mathcal{V}$ as
\begin{align*}
\textbf{A}_i &= \textrm{blkcol}\bigl\{\bar{\textbf{A}}_{i,k}: \mathbb{S}(\mathcal{R}_k) = \{i\}, \forall 1 \le k \le K\bigr\},\\
\textbf{a}_i &= \textrm{blkcol}\bigl\{\bar{\textbf{a}}_{i,k}: \mathbb{S}(\mathcal{R}_k) = \{i\}, \forall 1 \le k \le K\bigr\}.
\end{align*}
For simple notation, we assume $\textbf{A}_i = \textbf{0}$ and $\textbf{a}_i = \textbf{0}$ for agent $i \in \mathcal{V}$ if $\mathbb{S}(\mathcal{R}_k) \neq \{i\}, \forall 1 \le k \le K$, i.e., the agent $i$ does not know all nonzero blocks in any row partition of the coefficient matrix $\textbf{H}$.
From \eqref{eq_local_equation_full}, \eqref{eq_LE_temp} and \eqref{eq_LE_virtual_constraint}, we have $\Psi(\bar{\textbf{z}}_1, \bar{\textbf{z}}_2, \dots, \bar{\textbf{z}}_N) = \sum_{i \in \mathcal{V}}\Psi_i(\bar{\textbf{z}}_i, \bar{\textbf{v}}_i)$ where $\Psi_i(\bar{\textbf{z}}_i, \bar{\textbf{v}}_i) = \frac{1}{2} \bigl|\bigl|\textbf{A}_i\bar{\textbf{z}}_i - \textbf{a}_i\bigr|\bigr|^2 + \sum\limits_{\epsilon_k \in \mathcal{M}_i} \frac{|\mathbb{S}(\mathcal{R}_{\epsilon_k})|}{2}\bigl|\bigl|\sum\limits_{j \in \mathcal{N}_{i,\epsilon_k}} \textbf{v}_{ij, \epsilon_k}^{(i)} + \bar{\textbf{B}}_{i,\epsilon_k}\bar{\textbf{z}}_i - \bar{\textbf{b}}_{i,\epsilon_k}\bigr|\bigr|^2$.

Consider the following optimization problem:
\begin{equation}\label{eq_LE_equivalentproblem}
\min_{\bar{\textbf{z}}_i, \bar{\textbf{v}}_i, \forall i \in \mathcal{V}} \sum_{i \in \mathcal{V}} \Psi_i(\bar{\textbf{z}}_i, \bar{\textbf{v}}_i) \textrm{ s.t. } \eqref{eq_coupling_commonvariable}, \eqref{eq_LE_virtual_constraint} \forall \epsilon_k \in \mathbb{M}.
\end{equation}
We have the following result:
\begin{Lemma}\label{lm_equivalent}
Let $\textrm{col}\{\textrm{col}\{\bar{\textbf{z}}_i^{opt}, \bar{\textbf{v}}_i^{opt}\}: i \in \mathcal{V}\}$ be one optimal solution to the problem \eqref{eq_LE_equivalentproblem}. Then $\bar{\textbf{z}}_i^{opt}, \forall i \in \mathcal{V},$ is the stacked vector of the parts corresponding to the agent $i$ in a least square solution to the linear algebraic equation \eqref{eq_LE}.
\end{Lemma}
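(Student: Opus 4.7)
The plan is to prove Lemma~\ref{lm_equivalent} by showing that the problem \eqref{eq_LE_equivalentproblem} is equivalent to \eqref{eq_LE_optimizationproblem} in both feasibility and optimal value. Since the excerpt already argues that \eqref{eq_LE_optimizationproblem} is equivalent to the least square problem \eqref{eq_LE_solution}, chaining the two equivalences will yield the conclusion. The equivalence will be established by exhibiting a correspondence between feasible points that preserves the cost in both directions.

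For the forward direction, I would take any feasible tuple $\{(\bar{\textbf{z}}_i, \bar{\textbf{v}}_i)\}_{i \in \mathcal{V}}$ of \eqref{eq_LE_equivalentproblem}, read off its $\textbf{z}_l^{(i)}$ components, and observe that \eqref{eq_coupling_commonvariable} together with the connectivity of $\mathcal{G}^l$ (the first item of Assumption~\ref{aspt_topology}) gives the consensus constraints in (\ref{eq_LE_optimizationproblem}b). The core step is to show $\sum_i \Psi_i(\bar{\textbf{z}}_i, \bar{\textbf{v}}_i)$ equals the cost in (\ref{eq_LE_optimizationproblem}a). For singleton partitions $\mathbb{S}(\mathcal{R}_k) = \{i\}$, the contribution $\tfrac12 \|\textbf{A}_i \bar{\textbf{z}}_i - \textbf{a}_i\|^2$ matches directly via \eqref{eq_local_equation_full}. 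For $\epsilon_k \in \mathbb{M}$, I would argue from \eqref{eq_LE_virtual_constraint}a propagated along a spanning tree of the connected graph $\mathcal{G}_{\epsilon_k}$ that the quantity $\textbf{q}_i := \sum_{j \in \mathcal{N}_{i,\epsilon_k}} \textbf{v}_{ij,\epsilon_k}^{(i)} + \bar{\textbf{B}}_{i,\epsilon_k}\bar{\textbf{z}}_i - \bar{\textbf{b}}_{i,\epsilon_k}$ is the \emph{same} for every $i \in \mathbb{S}(\mathcal{R}_{\epsilon_k})$. Summing $\textbf{q}_i$ over $i$ and using \eqref{eq_LE_virtual_constraint}b to cancel the pairs $\textbf{v}_{ij,\epsilon_k}^{(i)} + \textbf{v}_{ij,\epsilon_k}^{(j)}$ leaves $|\mathbb{S}(\mathcal{R}_{\epsilon_k})|\,\textbf{q}_i = \sum_{i \in \mathbb{S}(\mathcal{R}_{\epsilon_k})}(\bar{\textbf{B}}_{i,\epsilon_k}\bar{\textbf{z}}_i - \bar{\textbf{b}}_{i,\epsilon_k}) = \textbf{u}_{\epsilon_k}$ by \eqref{eq_coupling_equation}, so $\tfrac{|\mathbb{S}(\mathcal{R}_{\epsilon_k})|}{2}\|\textbf{q}_i\|^2 = \tfrac{1}{2|\mathbb{S}(\mathcal{R}_{\epsilon_k})|}\|\textbf{u}_{\epsilon_k}\|^2 \cdot |\mathbb{S}(\mathcal{R}_{\epsilon_k})|/|\mathbb{S}(\mathcal{R}_{\epsilon_k})|$ aggregates over $i$ to the $\epsilon_k$-th term in (\ref{eq_LE_optimizationproblem}a).

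For the reverse direction, I would start from any feasible $\{\textbf{z}_l^{(i)}\}$ of \eqref{eq_LE_optimizationproblem} and stack it into $\bar{\textbf{z}}_i$, so that \eqref{eq_coupling_commonvariable} holds trivially. I then need to construct virtual variables $\bar{\textbf{v}}_i$ satisfying both \eqref{eq_LE_virtual_constraint}a and \eqref{eq_LE_virtual_constraint}b while keeping each $\textbf{q}_i$ equal to $\tfrac{1}{|\mathbb{S}(\mathcal{R}_{\epsilon_k})|}\textbf{u}_{\epsilon_k}$; this is precisely the existence claim already justified in the excerpt by the underdetermined linear system \eqref{eq_LE_temp_virtual_var} on the connected subgraph $\mathcal{G}_{\epsilon_k}$. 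Setting $\textbf{v}_{ij,\epsilon_k}^{(i)} = \textbf{v}_{ij,\epsilon_k} - \textbf{v}_{ji,\epsilon_k}$ yields \eqref{eq_LE_virtual_constraint}b by antisymmetry and matches the per-agent error, so the same calculation as above shows that the cost $\sum_i \Psi_i$ equals the cost of \eqref{eq_LE_optimizationproblem}. Because the two problems thus have identical feasible cost sets, an optimal $\{(\bar{\textbf{z}}_i^{opt}, \bar{\textbf{v}}_i^{opt})\}$ of \eqref{eq_LE_equivalentproblem} induces an optimal $\{\textbf{z}_l^{(i),opt}\}$ of \eqref{eq_LE_optimizationproblem}, which by the remark following \eqref{eq_LE_optimizationproblem} assembles into a least square solution of \eqref{eq_LE}.

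The main obstacle I anticipate is carefully proving the key identity that \eqref{eq_LE_virtual_constraint}a and \eqref{eq_LE_virtual_constraint}b together force $\textbf{q}_i$ to be a \emph{constant} $\tfrac{1}{|\mathbb{S}(\mathcal{R}_{\epsilon_k})|}\textbf{u}_{\epsilon_k}$ across $i \in \mathbb{S}(\mathcal{R}_{\epsilon_k})$. Connectivity of $\mathcal{G}_{\epsilon_k}$ gives equality of all $\textbf{q}_i$ via a spanning-tree argument using \eqref{eq_LE_virtual_constraint}a, and then \eqref{eq_LE_virtual_constraint}b is what makes the double sum of the $\textbf{v}$-terms telescope to zero. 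All the remaining manipulations are routine bookkeeping in terms of the operators $\textrm{blkrow}$ and $\textrm{blkcol}$, so the lemma reduces to this single structural observation.
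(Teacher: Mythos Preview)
Your proposal is correct and uses essentially the same ingredients as the paper: the construction of virtual variables $\textbf{v}_{ij,\epsilon_k}$ via the solvable system \eqref{eq_LE_temp_virtual_var} on the connected subgraph $\mathcal{G}_{\epsilon_k}$, and the identity $\sum_i \Psi_i(\bar{\textbf{z}}_i,\bar{\textbf{v}}_i)=\Psi(\bar{\textbf{z}}_1,\dots,\bar{\textbf{z}}_N)$ on feasible points. The only cosmetic difference is that the paper packages the argument as a short contradiction (assuming $\{\bar{\textbf{z}}_i^{opt}\}$ is not optimal for \eqref{eq_LE_optimizationproblem_rewritten} and lifting a strictly better $\{\bar{\textbf{z}}_i^*\}$ to a feasible point of \eqref{eq_LE_equivalentproblem}), whereas you spell out the two-way cost-preserving correspondence directly; your version is in fact more explicit about the forward direction, which the paper takes for granted from the discussion preceding the lemma.
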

\begin{proof}
Lemma \ref{lm_equivalent} is proved by showing that $\textrm{col}\{\bar{\textbf{z}}_i^{opt}: i \in \mathcal{V}\}$ is one optimal solution to the problem \eqref{eq_LE_optimizationproblem_rewritten}.
We consider the contradiction. Assuming that there exists a set of vectors $\{\bar{\textbf{z}}_i^*: i \in \mathcal{V}\}$ satisfying constraint \eqref{eq_coupling_commonvariable} and $\Psi(\bar{\textbf{z}}_1^*, \bar{\textbf{z}}_2^*, \dots, \bar{\textbf{z}}_N^*) < \Psi(\bar{\textbf{z}}_1^{opt}, \bar{\textbf{z}}_2^{opt}, \dots, \bar{\textbf{z}}_N^{opt})$. For each $\epsilon_k \in \mathbb{M}$, we define $\textbf{u}_{\epsilon_k}^* = \sum_{i \in \mathbb{S}(\mathcal{R}_{\epsilon_k})} \left(\bar{\textbf{B}}_{i,\epsilon_k}\bar{\textbf{z}}_i^* - \bar{\textbf{b}}_{i,\epsilon_k}\right)$ and the set $\{\textbf{v}_{ij,\epsilon_k}^*: (i,j) \in \mathcal{E}_{\epsilon_k}\}$ satisfying \eqref{eq_LE_temp_virtual_var}.

Let $\bar{\textbf{v}}_i^* = \textrm{col}\bigl\{ \textrm{col}\bigl\{\textbf{v}_{ij,\epsilon_k}^* - \textbf{v}_{ji,\epsilon_k}^*: j \in \mathcal{N}_{i,\epsilon_k}\bigr\}: \epsilon_k \in \mathcal{M}_i\bigr\}$, $\forall i \in \mathcal{V}$. It is clear that $\textrm{col}\{\textrm{col}\{\bar{\textbf{z}}_i^*, \bar{\textbf{v}}_i^*\}: i \in \mathcal{V}\}$ is a feasible solution of the optimization problem \eqref{eq_LE_equivalentproblem}. Moreover, we have $\sum_{i \in \mathcal{V}} \Psi_i(\bar{\textbf{z}}_i^*, \bar{\textbf{v}}_i^*) = \Psi(\bar{\textbf{z}}_1^*, \bar{\textbf{z}}_2^*, \dots, \bar{\textbf{z}}_N^*) < \Psi(\bar{\textbf{z}}_1^{opt}, \bar{\textbf{z}}_2^{opt}, \dots, \bar{\textbf{z}}_N^{opt}) = \sum_{i \in \mathcal{V}} \Psi_i(\bar{\textbf{z}}_i^{opt}, \bar{\textbf{v}}_i^{opt})$. This contradicts to the fact that $\textrm{col}\{\textrm{col}\{\bar{\textbf{z}}_i^{opt}, \bar{\textbf{v}}_i^{opt}\}: i \in \mathcal{V}\}$ is one optimal solution to the problem \eqref{eq_LE_equivalentproblem}.
\end{proof}
\begin{Remark}
In APPENDIX-A, we provide an example to illustrate the list and sets in distributed setup and the reformulation of distributed linear algebraic equation (\ref{eq_local_equation_full}-\ref{eq_coupling_equation}) into the distributed optimization problem \eqref{eq_LE_equivalentproblem}.
\end{Remark}
\section{Distributed ADMM-based method}
In this section, we apply the proximal ADMM algorithm proposed in \cite{BingshengHe2015} to design a distributed method for solving the optimization problem \eqref{eq_LE_equivalentproblem}. The designed method is guaranteed to have exponential convergence.
\subsection{ADMM based solution}
Let $\textbf{x}_i = \textrm{col}\{\bar{\textbf{z}}_i,\bar{\textbf{v}}_i\}$ be the collection of all local variables of the agent $i \in \mathcal{V}$.
Let $\bar{n}_i$ be the length of the vector $\bar{\textbf{z}}_i$ for every agent $i \in \mathcal{V}$. We have $\bar{\textbf{z}}_i = \textrm{blkrow}\{\textbf{I}_{\bar{n}_i}, \textbf{0}\}\textbf{x}_i, \forall i \in \mathcal{V}$.
Define $\textbf{u}_{\epsilon_k}^{(i)}$ as the copy of the vector $\textbf{u}_{\epsilon_k}$  in \eqref{eq_LE_virtual_var} and the stacked vector $\bar{\textbf{u}}_i = \textrm{col}\{\textbf{u}_{\epsilon_k}^{(i)}: \epsilon_k \in \mathcal{M}_i\}$. Then we have $\bar{\textbf{u}}_i = \textbf{C}_i\textbf{x}_i - \textbf{c}_i$ where $\textbf{c}_i = \textrm{col}\{\bar{\textbf{b}}_{i,\epsilon_k}: \epsilon_k \in \mathcal{M}_{i}\}$ and\\
$\textbf{C}_i = \left[\begin{matrix}
\textrm{blkrow}\{\bar{\textbf{B}}_{i,\epsilon_k}^T: \epsilon_k \in \mathcal{M}_{i}\}\\
\textrm{blkdiag}\bigl\{\textrm{blkcol}\bigl\{\textbf{I}_{m_k}: j \in \mathcal{N}_{i,\epsilon_k}\bigr\}: \epsilon_k \in \mathcal{M}_i\bigr\}
\end{matrix}\right]^T$.
The local cost function $\Psi_i(\bar{\textbf{z}}_i, \bar{\textbf{v}}_i)$ of the agent $i \in \mathcal{V}$ has the quadratic form as \[\Psi_i(\bar{\textbf{z}}_i, \bar{\textbf{u}}_i, \bar{\textbf{v}}_i) = \frac{1}{2}\textbf{x}_i^T\textbf{Q}_i\textbf{x}_i + \textbf{q}_i^T\textbf{x}_i + \textbf{a}_i^T\textbf{a}_i + \textbf{c}_i^T\textbf{c}_i.\]
where $\textbf{Q}_i = \textrm{blkdiag}\Bigl\{\textbf{A}_i^T\textbf{A}_i, \textbf{0}\Bigr\} + \textbf{C}_i^T\textbf{C}_i$ and $\textbf{q}_i = \textrm{col}\{-\textbf{A}_i^T\textbf{a}_i, \textbf{0}\} + \textbf{C}_i^T\textbf{c}_i$.
It is easy to verify that $\textbf{Q}_i$ is a symmetric and positive semidefinite matrix and $\textbf{C}_i$ is a full row rank matrix, $\forall i \in \mathcal{V}$.

For two agents $i$ and $j$, where $(i,j) \in \mathcal{E}$ and $\epsilon_k \in \mathcal{M}_i \cap \mathcal{M}_j$, we assume that they choose the pair of matrices $\textbf{D}_{ij,\epsilon_k}$ and $\textbf{D}_{ji,\epsilon_k}$ such that if $\textbf{D}_{ij,\epsilon_k} = \textbf{I}$ then $\textbf{D}_{ji,\epsilon_k} = -\textbf{I}$ and if $\textbf{D}_{ij,\epsilon_k} = -\textbf{I}$ then $\textbf{D}_{ji,\epsilon_k} = \textbf{I}$ and vice versa.
Define the matrices $\textbf{E}_{ij}, \forall j \in \mathcal{N}_i$, such that
\begin{equation}
\textbf{E}_{ij}\textbf{x}_i = \textrm{col}\{\textbf{P}_{ij}\bar{\textbf{z}}_i, \textrm{col}\{\textbf{u}_{\epsilon_k}^{(i)}, \textbf{D}_{ij,\epsilon_k}\textbf{v}_{ij,\epsilon_k}^{(i)}: \epsilon_k \in \mathcal{M}_i \cap \mathcal{M}_j\}\}.
\end{equation}
It is clear that $\textbf{E}_{ij}\textbf{x}_i = \textbf{E}_{ji}\textbf{x}_j, \forall (i,j) \in \mathcal{E}$.
The optimization problem \eqref{eq_LE_equivalentproblem} can be rewritten in the following form:
\begin{subequations}\label{eq_distributed_optADMM}
\begin{align}
\min\limits_{\substack{\textbf{x}_i, \forall i \in \mathcal{V}\\ (\textbf{y}_{ij}, \textbf{y}_{ji}) \in \mathcal{Y}_{ij}, \forall (i,j) \in \mathcal{E}}}& \sum_{i \in \mathcal{V}}\left(\frac{1}{2}\textbf{x}_i^T\textbf{Q}_i\textbf{x}_i + \textbf{q}_i^T\textbf{x}_i\right)\\
\textrm{ s.t. }& \textbf{E}_{ij}\textbf{x}_i = \textbf{y}_{ij}, \forall j \in \mathcal{N}_i, \forall i \in \mathcal{V}.
\end{align}
\end{subequations}
where $\mathcal{Y}_{ij} = \{(\textbf{y}_{ij}, \textbf{y}_{ji}): \textbf{y}_{ij} = \textbf{y}_{ji}\}$ for all $(i,j) \in \mathcal{E}$.
The coupling constraints \eqref{eq_coupling_commonvariable} and \eqref{eq_LE_virtual_constraint} between two neighboring agents $i$ and $j$, where $(i,j) \in \mathcal{E}$, are replaced by the equality constraints in (\ref{eq_distributed_optADMM}b) combining the constraint $(\textbf{y}_{ij}, \textbf{y}_{ji}) \in \mathcal{Y}_{ij}$.
It is easy to verify that the problem \eqref{eq_distributed_optADMM} has the form of the convex optimization problem \eqref{eq_ADMMproblem} with the variable vectors $\textbf{x} = \textrm{col}\{\textbf{x}_1, \textbf{x}_2, \dots, \textbf{x}_N\}$, $\textbf{y} = \textrm{col}\{\textrm{col}\{\textbf{y}_{ij}, \textbf{y}_{ji}\}: (i,j) \in \mathcal{E}\}$, the cost functions $\Psi_x(\textbf{x}) = \sum_{i \in \mathcal{V}}\Bigl\{\frac{1}{2}\textbf{x}_i^T\textbf{Q}_i\textbf{x}_i + \textbf{q}_i^T\textbf{x}_i\Bigr\}$, $\Psi_y(\textbf{y}) = 0$, and
$\mathcal{Y} = \prod\limits_{(i,j) \in \mathcal{E}} \mathcal{Y}_{ij}$.
So, the optimal solution of the problem \eqref{eq_distributed_optADMM} can be found by using the proximal ADMM algorithm \eqref{eq_ADMM}.

Define $\boldsymbol{\lambda}_{ij}$, $\forall j \in \mathcal{N}_i$, as the dual variables corresponding to the equality constraints in (\ref{eq_distributed_optADMM}b) of the agent $i \in \mathcal{V}$. These dual variables can be stacked into a vector $\boldsymbol{\lambda} = \textrm{col}\{\textrm{col}\{\boldsymbol{\lambda}_{ij}, \boldsymbol{\lambda}_{ji}\}: (i,j) \in \mathcal{E}\}$.
The augmented Lagrangian function of the problem \eqref{eq_distributed_optADMM} is given by
\[\mathcal{L}_{\rho}(\textbf{x},\textbf{y},\boldsymbol{\lambda}) = \sum\limits_{i \in \mathcal{V}} \Bigl\{\frac{1}{2}\textbf{x}_i^T\textbf{Q}_i\textbf{x}_i + \textbf{q}_i^T\textbf{x}_i + \frac{\rho}{2}\sum\limits_{j \in \mathcal{N}_i}||\textbf{E}_{ij}\textbf{x}_i - \textbf{y}_{ij} - \frac{1}{\rho}\boldsymbol{\lambda}_{ij}||^2\Bigr\}.\]
Due to the separation of the augmented Lagrangian function $\mathcal{L}(\textbf{x},\textbf{y},\boldsymbol{\lambda})$, the equations (\ref{eq_ADMM}a) and (\ref{eq_ADMM}b) are equivalent to the equations \eqref{eq_updatelaw_primal_1} and \eqref{eq_updatelaw_primal_2}, respectively. Here, the matrix $\textbf{G} = \textrm{blkdiag}\{\textbf{G}_i: i \in \mathcal{V}\}$ where $\textbf{G}_i$ is a positive semi-definite matrix chosen by the agent $i \in \mathcal{V}$.
\begin{equation}\label{eq_updatelaw_primal_1}
\textbf{x}_i(s+1) = \arg\min\limits_{\textbf{x}_i} \left(\frac{1}{2}\textbf{x}_i^T\textbf{Q}_i\textbf{x}_i + \textbf{q}_i^T\textbf{x}_i + \frac{\rho}{2}\sum\limits_{j \in \mathcal{N}_i}||\textbf{E}_{ij}\textbf{x}_i - \textbf{y}_{ij}(s) - \frac{1}{\rho}\boldsymbol{\lambda}_{ij}(s)||^2 + \frac{1}{2}||\textbf{x}_i - \textbf{x}_i(s)||_{\textbf{G}_i}\right), \forall i \in \mathcal{V}.
\end{equation}
\begin{equation}\label{eq_updatelaw_primal_2}
(\textbf{y}_{ij},\textbf{y}_{ji})(s+1) = \arg\min\limits_{\textbf{y}_{ij} = \textbf{y}_{ji}} \left(||\textbf{E}_{ij}\textbf{x}_i(s+1) - \textbf{y}_{ij} - \frac{1}{\rho}\boldsymbol{\lambda}_{ij}(s)||^2 + ||\textbf{E}_{ji}\textbf{x}_j(s+1) - \textbf{y}_{ji} - \frac{1}{\rho}\boldsymbol{\lambda}_{ji}(s)||^2\right), \forall (i,j) \in \mathcal{E}.
\end{equation}
\subsection{Detailed algorithm}
Assume that $\textbf{G}_i$ is chosen such that the matrix $\hat{\textbf{Q}}_i$ (defined in (\ref{eq_updatelaw_primal_1_detailed}b)) is positive definite, the update \eqref{eq_updatelaw_primal_1} is equivalent to \eqref{eq_updatelaw_primal_1_detailed}, $\forall i \in \mathcal{V}$.
\begin{subequations}\label{eq_updatelaw_primal_1_detailed}
\begin{align}
\textbf{x}_i(s+1) &= -\hat{\textbf{Q}}_i^{-1}\hat{\textbf{q}}_i(s),\\
\textrm{where } \hat{\textbf{Q}}_i &= \textbf{Q}_i + \textbf{G}_i + \rho\sum_{j \in \mathcal{N}_i} \textbf{E}_{ij}^T\textbf{E}_{ij},\\
\hat{\textbf{q}}_i(s) &= \textbf{q}_i - \textbf{G}_i\textbf{x}_i(s) - \sum\limits_{j \in \mathcal{N}_i}\textbf{E}_{ij}^T\left(\rho\textbf{y}_{ij}(s) + \boldsymbol{\lambda}_{ij}(s)\right).
\end{align}
\end{subequations}
Consider the optimization problem in \eqref{eq_updatelaw_primal_2}. The KKT conditions of this problem consist of
\begin{gather}
\textbf{E}_{ij}\textbf{x}_i(s+1) - \textbf{y}_{ij}(s+1) - \frac{1}{\rho}\boldsymbol{\lambda}_{ij}(s) + \boldsymbol{\mu} = \textbf{0}, \label{eq_temp1}\\
\textbf{E}_{ji}\textbf{x}_j(s+1) - \textbf{y}_{ji}(s+1) - \frac{1}{\rho}\boldsymbol{\lambda}_{ji}(s) - \boldsymbol{\mu} = \textbf{0}, \label{eq_temp2}\\
\textbf{y}_{ij}(s+1) = \textbf{y}_{ji}(s+1), \label{eq_temp3}
\end{gather}
where $\boldsymbol{\mu}$ is the dual variable corresponding to the constraint $\textbf{y}_{ij} = \textbf{y}_{ji}$.
Adding two equations \eqref{eq_temp1} and \eqref{eq_temp2}, we obtain $\textbf{y}_{ij}(s+1) + \textbf{y}_{ji}(s+1) = \textbf{E}_{ij}\textbf{x}_i(s+1) - \frac{1}{\rho}\boldsymbol{\lambda}_{ij}(s) + \textbf{E}_{ji}\textbf{x}_j(s+1) - \textbf{y}_{ji}(s+1) - \frac{1}{\rho}\boldsymbol{\lambda}_{ji}(s)$.
From \eqref{eq_temp3}, we have 
\begin{equation}\label{eq_updatelaw_primal_2_detailed}
\textbf{y}_{ij}(s+1) = \frac{1}{2}\left(\textbf{E}_{ij}\textbf{x}_i(s+1) - \frac{1}{\rho}(\boldsymbol{\lambda}_{ij}(s) + \textbf{E}_{ji}\textbf{x}_j(s+1) - \frac{1}{\rho}\boldsymbol{\lambda}_{ji}(s)\right),
\end{equation}
for all $(i,j) \in \mathcal{E}$.
According to (\ref{eq_ADMM}c), the update of the dual variables $\boldsymbol{\lambda}_{ij}, \forall (i,j) \in \mathcal{E},$ are given as
\begin{equation}\label{eq_updatelaw_dual}
\boldsymbol{\lambda}_{ij}(s+1) = \boldsymbol{\lambda}_{ij}(s) - \rho\left(\textbf{E}_{ij}\textbf{x}_i(s+1) - \textbf{y}_{ij}(s+1)\right).
\end{equation}
We provide Algorithm \ref{alg_proposed_control} to highlight the implementation of our designed method for one agent. It is easy to verify that this algorithm requires only local information for each agent except the common penalty parameter $\rho$ as global information. However, this information can be easily achieved by the agreement among agents. In the iteration update (\ref{eq_updatelaw_primal_1_detailed}-\ref{eq_updatelaw_dual}), ever agent $i \in \mathcal{V}$ uses only information belonging to itself or received from its neighbors in $\mathcal{N}_i$.
\begin{algorithm}[htb]
\begin{algorithmic}[1]
\BState \emph{Agree a common positive penalty parameter} $\rho > 0$
\BState \emph{Choose} $\textbf{G}_i \succeq 0$ such that $\hat{\textbf{Q}}_i \succ 0$ (given in (\ref{eq_updatelaw_primal_1_detailed}b))
\BState \emph{Initiate:} $s = 0, \textbf{x}_i(0) = \textbf{0}$ and $\textbf{y}_{ij}(0) = \textbf{0}, \forall j \in \mathcal{N}_i$. 
\BState \emph{Repeat until convergence:}
\State $\quad$ $\textbf{x}_i(s+1) \gets$ \eqref{eq_updatelaw_primal_1_detailed},
\State $\quad$ Send $\textbf{E}_{ij}\textbf{x}_i(s+1) - \frac{1}{\rho}\boldsymbol{\lambda}_{ij}(s)$ to all $j \in \mathcal{N}_i$,
\State $\quad$ Receive $\textbf{E}_{ji}\textbf{x}_j(s+1) - \frac{1}{\rho}\boldsymbol{\lambda}_{ji}(s)$ from all $j \in \mathcal{N}_i$,
\State $\quad$ $\textbf{y}_i(s+1) \gets$ \eqref{eq_updatelaw_primal_2_detailed}, $\forall j \in \mathcal{N}_i$,
\State $\quad$ $\boldsymbol{\lambda}_{ij}(s+1) \gets$ \eqref{eq_updatelaw_dual}, $\forall j \in \mathcal{N}_i$,
\State $\quad$ $s \gets s + 1$,
\BState \emph{Finish algorithm:} $\textbf{z}_i^{output} = \textrm{blkrow}\{\textbf{I}_{\bar{n}_i}, \textbf{0}\}\textbf{x}_i(s)$
\end{algorithmic}
\caption{Distributed method for agent $i \in \mathcal{V}$ to find $\bar{\textbf{z}}_{\mathcal{S}_i}^{opt}$.}\label{alg_proposed_control}
\end{algorithm}

Let $\textbf{w}_i(s) = \textrm{col}\Bigl\{ \textbf{x}_i(s), \textrm{col}\bigl\{\textbf{y}_{ij}(s), \boldsymbol{\lambda}_{ij}(s): j \in \mathcal{N}_i\bigr\}\Bigr\}$ be the stacked vector corresponding to the estimated solution of the agent $i \in \mathcal{V}$ at the iteration step $s$ in the update (\ref{eq_updatelaw_primal_1_detailed}-\ref{eq_updatelaw_dual}).
Based on Theorem \ref{th_ADMM}, we have the following theorem:
\begin{Theorem}\label{th_main}
Let every agent $i \in \mathcal{V}$ apply Algorithm 1. The estimated solution $\textbf{w}(s) = \textrm{col}\{\textbf{w}_i(s): i \in \mathcal{V}\}$ converges exponentially to $\textbf{w}^{\infty}$ where $\textbf{w}^{\infty} = \textrm{col}\{\textbf{w}_i^{\infty}: i \in \mathcal{V}\}$ is one optimal solution of the optimization problem \eqref{eq_distributed_optADMM}.
\end{Theorem}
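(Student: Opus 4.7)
The plan is to leverage the affine structure of the updates in Algorithm \ref{alg_proposed_control} and upgrade the asymptotic convergence certified by Theorem \ref{th_ADMM} into an exponential rate. First, I would note that each of \eqref{eq_updatelaw_primal_1_detailed}, \eqref{eq_updatelaw_primal_2_detailed}, and \eqref{eq_updatelaw_dual} is an affine map in the stacked iterate $\textbf{w}(s)$: the primal update $\textbf{x}_i(s+1) = -\hat{\textbf{Q}}_i^{-1}\hat{\textbf{q}}_i(s)$ is well-defined because $\hat{\textbf{Q}}_i$ is positive definite by the choice of $\textbf{G}_i$, while $\hat{\textbf{q}}_i(s)$, $\textbf{y}_{ij}(s+1)$, and $\boldsymbol{\lambda}_{ij}(s+1)$ are linear in the previous iterates. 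Concatenating over $i \in \mathcal{V}$ and $(i,j) \in \mathcal{E}$ yields a single global recursion of the form $\textbf{w}(s+1) = \textbf{T}\textbf{w}(s) + \textbf{t}$ with a constant matrix $\textbf{T}$ and vector $\textbf{t}$ determined by the problem data and $\rho$.

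Next, I would apply Theorem \ref{th_ADMM} to the reformulated problem \eqref{eq_distributed_optADMM}, which fits the template \eqref{eq_ADMMproblem} with convex quadratic $\Psi_x$, zero $\Psi_y$, affine equality constraints in (\ref{eq_distributed_optADMM}b), and a linear subspace $\mathcal{Y} = \prod_{(i,j) \in \mathcal{E}} \mathcal{Y}_{ij}$. Items $(2)$ and $(3)$ of the theorem force the constraint violation and the cost gap to vanish in the limit; together with the affine recursion this yields a genuine limit point $\textbf{w}^{\infty}$ that is feasible and optimal for \eqref{eq_distributed_optADMM}. Passing to the limit in the recursion gives $\textbf{w}^{\infty} = \textbf{T}\textbf{w}^{\infty} + \textbf{t}$, so the error $\textbf{e}(s) := \textbf{w}(s) - \textbf{w}^{\infty}$ obeys the purely linear dynamics $\textbf{e}(s+1) = \textbf{T}\textbf{e}(s)$, equivalently $\textbf{e}(s) = \textbf{T}^{s}\textbf{e}(0)$.

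The exponential rate then follows from a spectral argument on $\textbf{T}$. Decomposing the ambient space into generalized eigenspaces of $\textbf{T}$, any nonzero component of $\textbf{e}(0)$ lying in an eigenspace associated with an eigenvalue of modulus at least one would either fail to decay, oscillate persistently, or diverge, which contradicts $\textbf{e}(s) \to \textbf{0}$ supplied by Theorem \ref{th_ADMM}. Hence $\textbf{e}(0)$ lies entirely in the $\textbf{T}$-invariant subspace spanned by generalized eigenvectors with $|\lambda| < 1$; on this subspace $\textbf{T}$ has spectral radius $\gamma < 1$, and passing to an adapted norm yields $\|\textbf{e}(s)\| \le C \gamma^{s} \|\textbf{e}(0)\|$ for some constant $C > 0$, which is the claimed exponential convergence.

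The step I expect to be hardest is the clean treatment of the eigenvalue $\lambda = 1$ of $\textbf{T}$: because \eqref{eq_distributed_optADMM} typically admits a continuum of optimal solutions, the set of ADMM fixed points forms a nontrivial affine subspace, so $\textbf{T}$ carries a nontrivial neutral subspace at $1$ and is not globally a contraction. The spectral argument must therefore be made relative to the particular limit $\textbf{w}^{\infty}$ selected by Algorithm \ref{alg_proposed_control} from its specific initialization, and it is precisely the existence of this limit, guaranteed by Theorem \ref{th_ADMM}, that absorbs the neutral component of $\textbf{e}(0)$ and leaves only the strictly stable part to decay geometrically.
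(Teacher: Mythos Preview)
Your proposal is correct and follows the same high-level strategy as the paper: observe that the updates \eqref{eq_updatelaw_primal_1_detailed}--\eqref{eq_updatelaw_dual} assemble into an affine recursion $\textbf{w}(s+1)=\textbf{T}\textbf{w}(s)+\textbf{t}$, invoke Theorem~\ref{th_ADMM} to get asymptotic convergence to an optimal $\textbf{w}^\infty$, and then run a spectral argument on the error dynamics $\textbf{e}(s+1)=\textbf{T}\textbf{e}(s)$ to upgrade this to an exponential rate.

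The execution of the spectral step differs. The paper exploits that Theorem~\ref{th_ADMM} guarantees convergence from \emph{every} initial point: by plugging eigenvectors (and generalized eigenvectors) of $\textbf{T}$ in as $\textbf{w}(0)$, it deduces that every eigenvalue $\sigma\neq 1$ satisfies $|\sigma|<1$ and that the eigenvalue $1$ is semisimple; it then block-diagonalizes $\textbf{T}$ and shows that the component of $\textbf{w}(s)-\textbf{w}^\infty$ along the $1$-eigenspace vanishes identically. Your argument is more economical: you work only with the single trajectory at hand, decompose $\textbf{e}(0)$ along the generalized eigenspaces of $\textbf{T}$, and observe that $\textbf{e}(s)\to\textbf{0}$ forces the components on all eigenspaces with $|\lambda|\ge 1$ to be zero, leaving $\textbf{e}(0)$ in the strictly stable subspace. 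Your route avoids having to characterize the full spectrum of $\textbf{T}$; the paper's route, in exchange, yields the additional structural information that $\textbf{T}$ has no eigenvalues on the unit circle other than a semisimple $1$.
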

\begin{proof}
The first item in Theorem \ref{th_ADMM} implies there exists a converged point $\textbf{w}^{\infty}$, i.e., $\lim_{s \rightarrow \infty} \textbf{w}(s) = \textbf{w}^{\infty}$, for arbitrarily initial point $\textbf{w}(0)$ where $\textbf{w}(s) = \textrm{col}\{\textbf{w}_i(s): i \in \mathcal{V}\}$ and $\textbf{w}^{\infty} = \textrm{col}\{\textbf{w}_i^{\infty}: i \in \mathcal{V}\}$. The second and third items of Theorem \ref{th_ADMM} correspond to the convergence of the equality constraint and the cost function value. They guarantee that the converged point $\textbf{w}^{\infty}$ satisfies all constraints and attain the minimum cost function value of the problem \eqref{eq_distributed_optADMM}. So $\textbf{w}^{\infty}$ is one optimal solution to the optimization problem \eqref{eq_distributed_optADMM}.

According to (\ref{eq_updatelaw_primal_1_detailed}-\ref{eq_updatelaw_dual}), there exist a constant matrix $\textbf{M}$ and a constant vector $\textbf{m}$ such that
\begin{equation}\label{eq_nonhomo}
\textbf{w}(s+1) = \textbf{M}\textbf{w}(s) + \textbf{m}.
\end{equation}
The detailed formulations of $\textbf{M}$ and $\textbf{m}$ are presented in this paper since they are not necessary in the analysis.

Define $\overline{\textbf{w}}(s) = \textbf{w}(s+1) - \textbf{w}(s)$, we have $\overline{\textbf{w}}(s) \rightarrow \textbf{0}$ as $s \rightarrow \infty$ with all initial setup $\textbf{w}(0)$.
From \eqref{eq_nonhomo}, we have \[\overline{\textbf{w}}(s+1) = \textbf{M}\overline{\textbf{w}}(s).\]
By choosing $\textbf{w}(0) = \textbf{0}$, we have $\overline{\textbf{w}}(s) = \textbf{M}^s\textbf{m}$. So, $\lim_{s \rightarrow \infty} \textbf{M}^s\textbf{m} = \textbf{0}$.
Let $\textbf{v}_{(\sigma)}$ be the eigenvector of the matrix $\textbf{M}$ corresponding to  the eigenvalue $\sigma$, i.e., $\textbf{M}\textbf{v}_{(\sigma)} = \sigma\textbf{v}_{(\sigma)}$. By choosing $\textbf{w}(0) = \textbf{v}_{(\sigma)}$, we have $\overline{\textbf{w}}(s) = (\sigma - 1)\sigma^s\textbf{v}_{(\sigma)} + \textbf{M}^s\textbf{m}$. The fact $\lim_{s \rightarrow \infty}\overline{\textbf{w}}(s) = \lim_{s \rightarrow \infty}(\sigma - 1)\sigma^s\textbf{v}_{(\sigma)} = \textbf{0}$ can be guaranteed if $|\sigma| < 1$ or $\sigma = 1$.
That means, for every eigenvalue $\sigma$ of the matrix $\textbf{M}$, if $\sigma \neq 1$ then $|\sigma| < 1$.
So, if the matrix $\textbf{M}$ does not have an eigenvalue at $1$, $\textbf{w}(s+1)$ converges to $\textbf{w}^{\infty}$ exponentially because $\textbf{w}(s+1) - \textbf{w}^{\infty} = \textbf{M}\left(\textbf{w}(s) - \textbf{w}^{\infty}\right)$.

Assume that the matrix $\textbf{M}$ has an eigenvalue at $1$ and its algebraic multiplicity is bigger than its geometric multiplicity. That means there exists a vector $\textbf{u}_{(1)}$ such that $(\textbf{M} - \textbf{I})\textbf{u}_{(1)} = \textbf{v}_{(1)}$.
Let $\textbf{w}(0) = \textbf{u}_{(1)}$, we have $\overline{\textbf{w}}(s) = \textbf{v}_{(1)} + \textbf{M}^s\textbf{m}$. This contradicts the asymptotic convergence to zero vector of $\overline{\textbf{w}}(s)$.
So, if $\textbf{M}$ has an eigenvalue at $1$, its algebraic multiplicity is equal to its geometric multiplicity.
This fact implies there exists a nonsingular matrix $\textbf{T}$ such that $\textbf{M} = \textbf{T}\left[\begin{matrix}\textbf{I}_k & \textbf{0}\\ \textbf{0} & \textbf{J} \end{matrix}\right]\textbf{T}^{-1}$ where $k$ is the algebraic multiplicity corresponding to the eigenvalue $1$ of the matrix $\textbf{M}$ and the matrix $\textbf{J}$ has all eigenvalues lying within the unit open disk.
Since $\textbf{w}(s+1) - \textbf{w}^{\infty} = \textbf{M}\left(\textbf{w}(s) - \textbf{w}^{\infty}\right)$, we have
\[\textbf{w}(s+l) - \textbf{w}^{\infty} = \textbf{T}\left[\begin{matrix} \textbf{I}_k & \textbf{0}\\ \textbf{0} & \textbf{J}^l \end{matrix}\right]\textbf{T}^{-1}\left(\textbf{w}(s) - \textbf{w}^{\infty}\right).\]
The asymptotic convergence of the estimated solution $\textbf{w}(s)$ to the converged point $\textbf{w}^{\infty}$ implies
\[\textbf{0} = \lim_{l \rightarrow \infty}\left(\textbf{w}(s+l) - \textbf{w}^{\infty}\right) = \textbf{T}\left[\begin{matrix}\textbf{I}_k & \textbf{0}\\ \textbf{0} & \textbf{0} \end{matrix}\right]\textbf{T}^{-1}\left(\textbf{w}(s) - \textbf{w}^{\infty}\right)\]
because $\textbf{J}^l \rightarrow \textbf{0}$ as $l \rightarrow \infty$.
So we have
\begin{align*}
\textbf{w}(s+1) - \textbf{w}^{\infty} &= \textbf{M}(\textbf{w}(s) - \textbf{w}^{\infty}) - \textbf{0}\\
&= \left(\textbf{M} - \textbf{T}\left[\begin{matrix}\textbf{I}_k & \textbf{0}\\ \textbf{0} & \textbf{0} \end{matrix}\right]\textbf{T}^{-1}\right)\left(\textbf{w}(s) - \textbf{w}^{\infty}\right)\\
&= \textbf{T}\left[\begin{matrix} \textbf{0}_{k \times k} & \textbf{0}\\ \textbf{0} & \textbf{J} \end{matrix}\right]\textbf{T}^{-1}\left(\textbf{w}(s) - \textbf{w}^{\infty}\right)\\
&= \textbf{T}\left[\begin{matrix} \textbf{0}_{k \times k} & \textbf{0}\\ \textbf{0} & \textbf{J}^{s+1} \end{matrix}\right]\textbf{T}^{-1}\left(\textbf{w}(0) - \textbf{w}^{\infty}\right).
\end{align*}
Since all eigenvalues in the matrix $\textbf{J}$ are within the unit open disk, $\textbf{w}(s) - \textbf{w}^{\infty}$ converges exponentially to the origin.
\end{proof}
According to Theorem \ref{th_main}, $\bar{\textbf{z}}_i(s) = \textrm{blkrow}\{\textbf{I}_{\bar{n}_i}, \textbf{0}\}\textbf{x}_i(s)$ converges exponentially to $\bar{\textbf{z}}_i^{\infty} = \textrm{blkrow}\{\textbf{I}_{\bar{n}_i}, \textbf{0}\}\textbf{x}_i^{\infty}, \forall i \in \mathcal{V}$.
Lemma \ref{lm_equivalent} guarantees that the stacked vector $\bar{\textbf{z}}_i^{\infty}, \forall i \in \mathcal{V},$ consists of parts corresponding to the agent $i$ in one least square solution of the linear algebraic equation \eqref{eq_LE}.
\section{Numerical simulations}
\subsection{Distributed setup with rows and columns decomposition}
This part considers a distributed linear algebraic equation over a network of $24$ agents, which can be represented in the form of (\ref{eq_local_equation_full}-\ref{eq_coupling_equation}). Assume that each agent has $20$ local variables and there is one coupling row partition across all agents. That means $\bar{\textbf{z}}_i \in \mathbb{R}^{20}$ for all $i \in \{1, 2, \dots, 24\}$ and $\mathbb{M} = \{\epsilon_1\}$, $\mathcal{R}_{\epsilon_1} = \{1, 2, \dots, 24\}$.
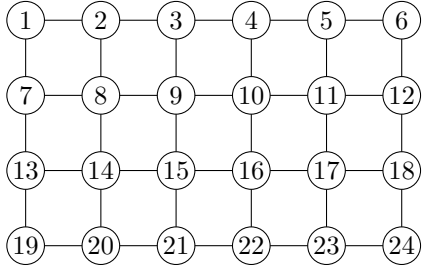
\begin{figure}[htb]
\centering
\begin{tikzpicture}
\draw (0,0) grid (5,3);
\node[draw,circle,minimum size=0.5cm,inner sep=0pt, fill=white, text=black] at (0,3) {$1$};
\node[draw,circle,minimum size=0.5cm,inner sep=0pt, fill=white, text=black] at (1,3) {$2$};
\node[draw,circle,minimum size=0.5cm,inner sep=0pt, fill=white, text=black] at (2,3) {$3$};
\node[draw,circle,minimum size=0.5cm,inner sep=0pt, fill=white, text=black] at (3,3) {$4$};
\node[draw,circle,minimum size=0.5cm,inner sep=0pt, fill=white, text=black] at (4,3) {$5$};
\node[draw,circle,minimum size=0.5cm,inner sep=0pt, fill=white, text=black] at (5,3) {$6$};
\node[draw,circle,minimum size=0.5cm,inner sep=0pt, fill=white, text=black] at (0,2) {$7$};
\node[draw,circle,minimum size=0.5cm,inner sep=0pt, fill=white, text=black] at (1,2) {$8$};
\node[draw,circle,minimum size=0.5cm,inner sep=0pt, fill=white, text=black] at (2,2) {$9$};
\node[draw,circle,minimum size=0.5cm,inner sep=0pt, fill=white, text=black] at (3,2) {$10$};
\node[draw,circle,minimum size=0.5cm,inner sep=0pt, fill=white, text=black] at (4,2) {$11$};
\node[draw,circle,minimum size=0.5cm,inner sep=0pt, fill=white, text=black] at (5,2) {$12$};
\node[draw,circle,minimum size=0.5cm,inner sep=0pt, fill=white, text=black] at (0,1) {$13$};
\node[draw,circle,minimum size=0.5cm,inner sep=0pt, fill=white, text=black] at (1,1) {$14$};
\node[draw,circle,minimum size=0.5cm,inner sep=0pt, fill=white, text=black] at (2,1) {$15$};
\node[draw,circle,minimum size=0.5cm,inner sep=0pt, fill=white, text=black] at (3,1) {$16$};
\node[draw,circle,minimum size=0.5cm,inner sep=0pt, fill=white, text=black] at (4,1) {$17$};
\node[draw,circle,minimum size=0.5cm,inner sep=0pt, fill=white, text=black] at (5,1) {$18$};
\node[draw,circle,minimum size=0.5cm,inner sep=0pt, fill=white, text=black] at (0,0) {$19$};
\node[draw,circle,minimum size=0.5cm,inner sep=0pt, fill=white, text=black] at (1,0) {$20$};
\node[draw,circle,minimum size=0.5cm,inner sep=0pt, fill=white, text=black] at (2,0) {$21$};
\node[draw,circle,minimum size=0.5cm,inner sep=0pt, fill=white, text=black] at (3,0) {$22$};
\node[draw,circle,minimum size=0.5cm,inner sep=0pt, fill=white, text=black] at (4,0) {$23$};
\node[draw,circle,minimum size=0.5cm,inner sep=0pt, fill=white, text=black] at (5,0) {$24$};
\end{tikzpicture}
\caption{Communication graph: nodes represent agents and edges represent available communication among agents.}\label{fig_simulation}
\end{figure}
Let the communication graph be given in Fig. \ref{fig_simulation} and $\bar{\textbf{B}}_{i,\epsilon_1} = \left[\begin{matrix}\textbf{0}_{5 \times 15} & \textbf{I}_{5}\end{matrix}\right], \forall i \in \{1, 2, \dots, 24\}$. Every vector $\bar{\textbf{b}}_{i,\epsilon_1}$ is chosen randomly.
In addition, we assume that the first $5$ elements in local variable vectors $\bar{\textbf{z}}_i, \forall i \in \{1, 2, \dots, 24\}$, are common variables. In other word, $\textbf{P}_{ij} = \textbf{P}_{ji} = \left[\begin{matrix}\textbf{I}_5 & \textbf{0}_{5 \times 15}\end{matrix}\right]$ for all pairs $(i,j)$ of neighboring agents. 
We use MATLAB to test the effectiveness of our designed method in finding least square solutions to one hundred numerical examples.
In each example, we choose $\textbf{A}_i \in \mathbb{R}^{20 \times 20}, \textbf{a}_i \in \mathbb{R}^{20}$ randomly, $\forall i \in \{1, 2, \dots, 24\}$. The ADMM-based method (\ref{eq_updatelaw_primal_1_detailed}-\ref{eq_updatelaw_dual}) is run with the selection of $\rho = 1$.
\begin{figure}[htb]
\centering
\includegraphics[width=0.5\textwidth]{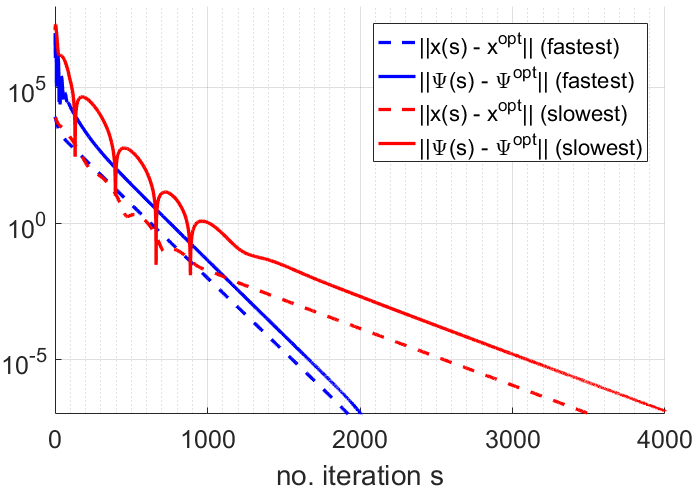}
\caption{The convergence to the least square solution of the proposed algorithm for the distributed linear algebraic equation with rows and columns decomposition.}\label{fig_result}
\end{figure}
Fig. \ref{fig_result} shows the simulation results of two examples having the fastest (blue lines) and the slowest (red lines) convergence rates. In this figure, the solid lines represent the convergence of the estimated solution in the update (\ref{eq_updatelaw_primal_1_detailed}-\ref{eq_updatelaw_dual}) to one optimal solution $\textbf{x}^{opt}$ of the problem \eqref{eq_distributed_optADMM}. The dashed lines illustrate to the evolution of the function $|\Psi(s) - \Psi^{opt}|$ where $\Psi(s) = \sum_{i = 1}^{24}\left(\textbf{x}_i(s)^T\textbf{Q}_i\textbf{x}_i(s) + \textbf{q}_i^T\textbf{x}_i(s) + \textbf{a}_i^T\textbf{a}_i + \textbf{c}_i^T\textbf{c}_i\right)$ and $\Psi^{opt}$ is the optimal cost value.
From the simulation results, it is easy to verify the exponential convergence of the estimated solution to the optimal one. A practical solution can be achieved after about $2000$ iterations. In addition, we measure the computation time required by each agent in every iteration step. The maximum time required by one agent for $2000$ iterations is $0.5$ second.
\subsection{Distributed setup with rows decomposition}
In this part, we compare our designed method with the existing methods proposed in Wang et. al. 2019 \cite{XuanWang2019} and Yang et. al. 2020 \cite{TaoYang2020}.
 Assume that the tested linear algebraic equation system has the form of \eqref{eq_distributedLE_row} and the communication network is given in Fig. \ref{fig_graph_example}. Let each agent $i$ know the row partitions $\textbf{A}_i$ and $\textbf{b}_i$.
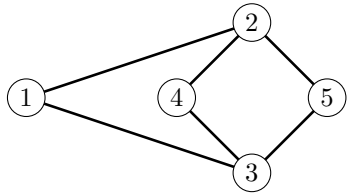
\begin{figure}[htb]
\centering
\begin{tikzpicture}
\node[] at (0,0) (n1){};
\node[] at (3,1) (n2){};
\node[] at (3,-1) (n3){};
\node[] at (2,0) (n4){};
\node[] at (4,0) (n5){};
\draw[-,{line width=1pt},black] (n2)--(n1);
\draw[-,{line width=1pt},black] (n3)--(n1);
\draw[-,{line width=1pt},black] (n2)--(n4);
\draw[-,{line width=1pt},black] (n3)--(n4);
\draw[-,{line width=1pt},black] (n2)--(n5);
\draw[-,{line width=1pt},black] (n3)--(n5);
\node[draw,circle,minimum size=0.5cm,inner sep=0pt, fill=white, text=black] at (0,0){$1$};
\node[draw,circle,minimum size=0.5cm,inner sep=0pt, fill=white, text=black] at (3,1){$2$};
\node[draw,circle,minimum size=0.5cm,inner sep=0pt, fill=white, text=black] at (3,-1){$3$};
\node[draw,circle,minimum size=0.5cm,inner sep=0pt, fill=white, text=black] at (2,0){$4$};
\node[draw,circle,minimum size=0.5cm,inner sep=0pt, fill=white, text=black] at (4,0){$5$};
\end{tikzpicture}
\caption{Communication graph of $5$ agents.}\label{fig_graph_example}
\end{figure}
\begin{equation}\label{eq_distributedLE_row}
\textbf{A} = \left[\begin{matrix} \textbf{A}_{1}\\ \textbf{A}_{2}\\ \textbf{A}_{3}\\ \textbf{A}_{4}\\ \textbf{A}_{5} \end{matrix}\right], \textbf{b} = \left[\begin{matrix} \textbf{b}_{1}\\ \textbf{b}_{2}\\ \textbf{b}_{3}\\ \textbf{b}_{4}\\ \textbf{b}_{5}\end{matrix}\right]
\end{equation}

We first test the convergence of three distributed algorithms in small linear algebraic equation systems $(\textbf{A}, \textbf{b})$ where $\textbf{A}_i \in \mathcal{R}^{1 \times 4}$. Fig. \ref{fig_result1} and Fig. \ref{fig_result2} correspond with the case $(\textbf{A}, \textbf{b}) = (\textbf{A}^{(1)}, \textbf{b}^{(1)})$ and $(\textbf{A}, \textbf{b}) = (\textbf{A}^{(2)}, \textbf{b}^{(2)})$, respectively. Note that, the linear algebraic equation $(\textbf{A}^{(1)}, \textbf{b}^{(1)})$ has a unique solution while the linear algebraic equation $(\textbf{A}^{(2)}, \textbf{b}^{(2)})$ has infinity solutions.
\begin{equation}
\textbf{A}^{(1)} = \left[\begin{matrix} 1 & 2 & 1 & 1\\ 2 & -1 & -1 & 1\\ 1 & -2 & 4 & -1\\ -1 & -0.6 & 0.4 & 1.8\\ 2 & 2 & -2 & 1\end{matrix}\right],
\textbf{b}^{(1)} = \left[\begin{matrix} 10\\ 20\\ 15\\ 17\\ 11 \end{matrix}\right]
\textrm{ and }
\textbf{A}^{(2)} = \left[\begin{matrix} 1 & 2 & 1 & 1\\ 1 & 1.4 & -1.6 & 2.8\\ 3 & 3. & -3.6 & 3.8\\ -1 & -0.6 & 0.4 & 1.8\\ 2 & 2 & -2 & 1\end{matrix}\right],
\textbf{b}^{(2)} = \left[\begin{matrix} 10\\ 20\\ 15\\ 17\\ 11 \end{matrix}\right]
\end{equation}
\begin{figure}[htb]
\centering
\includegraphics[width=0.75\textwidth]{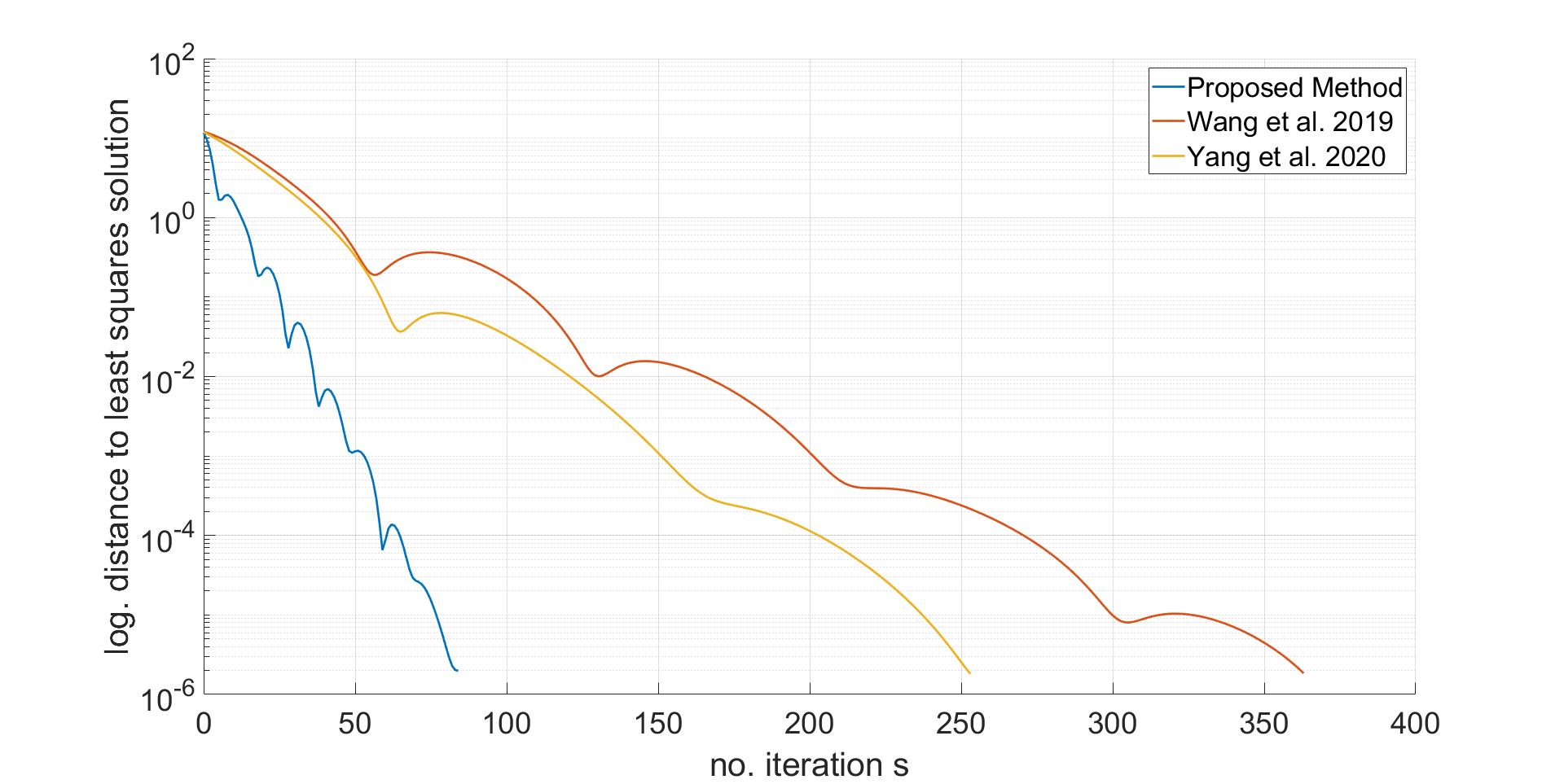}
\caption{The convergence to the least square solution of three distributed algorithms for the distributed linear algebraic equation $(\textbf{A}^{(1)}, \textbf{b}^{(1)})$ with $5$ agents.}\label{fig_result1}
\end{figure}
\begin{figure}[htb]
\centering
\includegraphics[width=0.75\textwidth]{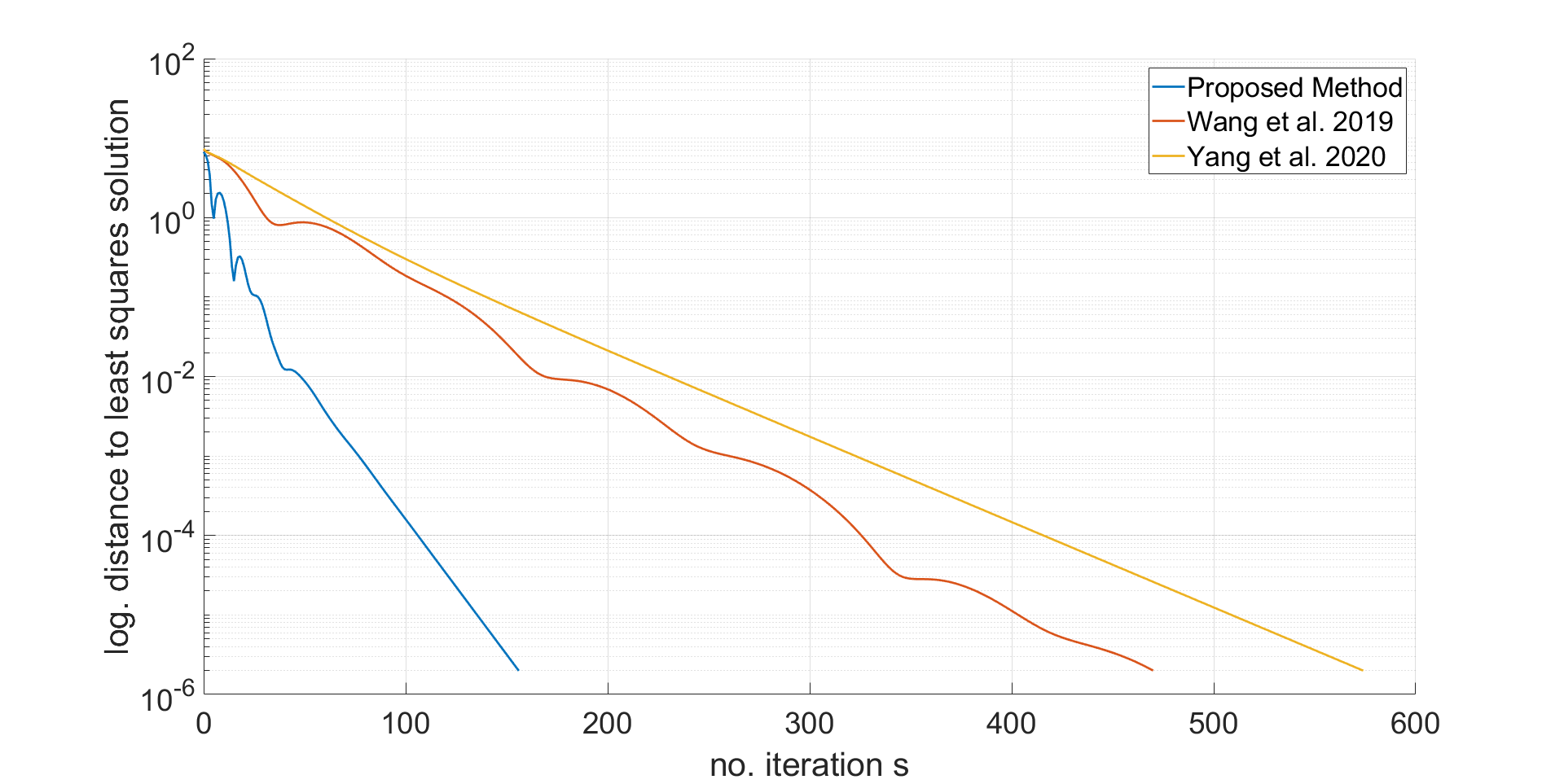}
\caption{The convergence to the least square solution of three distributed algorithms for the distributed linear algebraic equation $(\textbf{A}^{(2)}, \textbf{b}^{(2)})$ with $5$ agents.}\label{fig_result2}
\end{figure}
In both cases, our proposed method requires significantly less iterations than two others.
Finally, we compare the effectiveness of three distributed methods for solving a larger linear algebraic equation system. Fig. \ref{fig_result3} presents the numerical results when the matrix $\textbf{A}_i$ and the vector $\textbf{b}_i$ are chosen randomly such that $\textbf{A}_i \in \mathbb{R}^{60 \times 30}$ and $\textbf{b} \in \mathbb{R}^{60}$.
\begin{figure}[htb]
\centering
\includegraphics[width=0.75\textwidth]{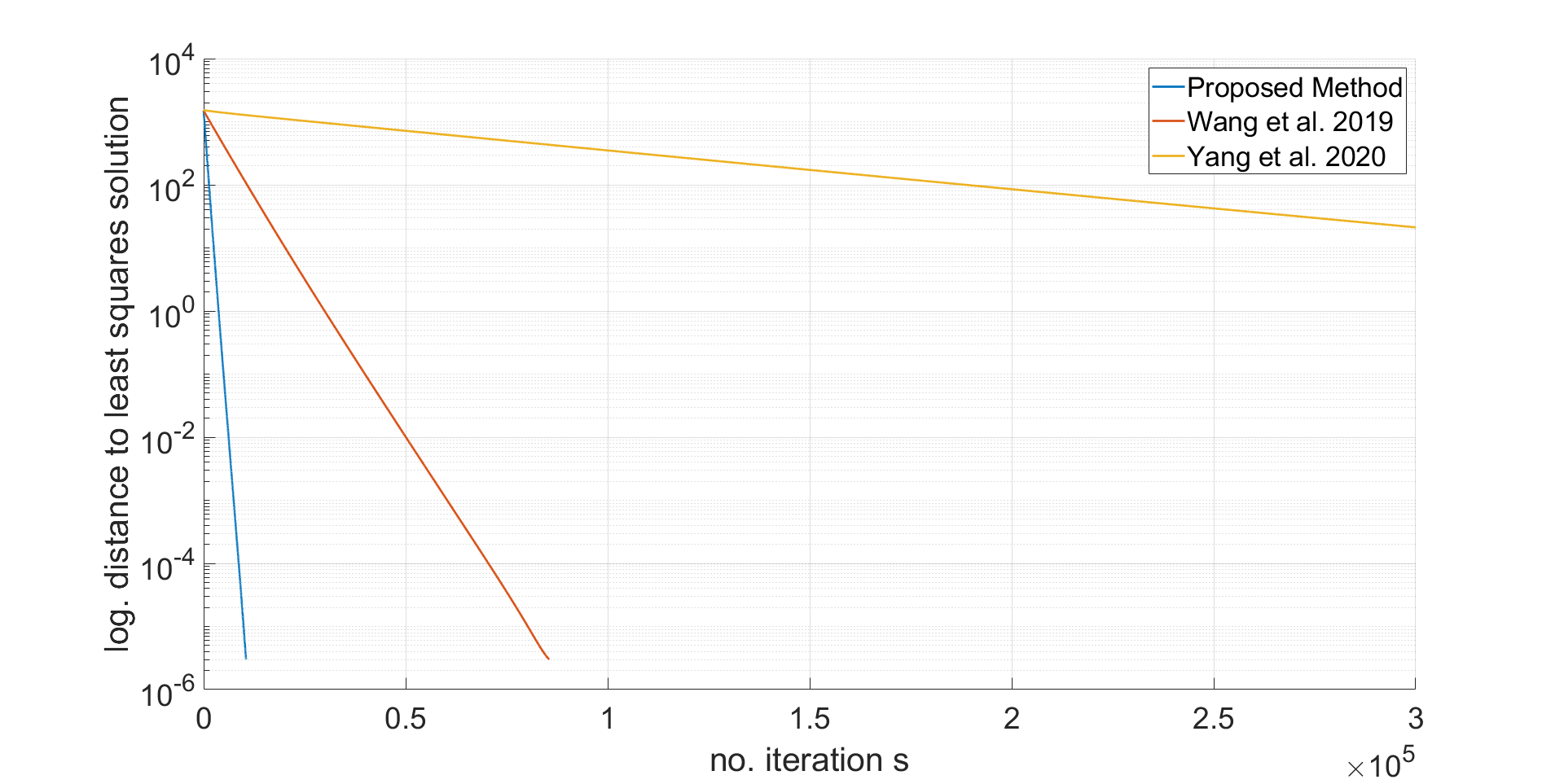}
\caption{The convergence to the least square solution of three distributed algorithms for larger distributed linear algebraic equation with $5$ agents.}\label{fig_result3}
\end{figure}
In this case, the advanced convergence of our proposed method is shown more clearly.
\section{Conclusion}
This paper considers a linear algebraic equation $\textbf{H}\textbf{z} = \textbf{h}$ when rows and columns of the coefficient matrix $\textbf{H}$ are dispensed across a network of agents. By introducing some virtual variables, the problem of finding one least square solution to the considered linear algebraic equation is reformulated into a distributed optimization problem. Then, we apply a proximal ADMM algorithm to design a distributed method for agents to solve the formulated optimization problem cooperatively. 
The estimated solution of every agent converges exponentially to its corresponding parts in one exact least square solution of the considered linear algebraic equation.
\appendix
\subsection{Illustrate the reformulation the problem of finding the least square solution to distributed optimization problem}
To illustrate the list and sets defined above, we consider an example as shown in Fig. \ref{fig_example}. First, we observe that $\mathcal{B}_1^{(1)} = \{\textbf{H}_{11}, \textbf{H}_{13}\}$, $\mathcal{B}_1^{(2)} = \mathcal{B}_1^{(3)} = \mathcal{B}_1^{(4)} = \mathcal{B}_1^{(5)} = \emptyset$; $\mathcal{B}_2^{(1)} = \mathcal{B}_2^{(3)} = \mathcal{B}_2^{(5)} = \emptyset$, $\mathcal{B}_2^{(2)} = \{\textbf{H}_{21}\}$, $\mathcal{B}_2^{(4)} = \{\textbf{H}_{24}\}$; $\mathcal{B}_3^{(1)} = \mathcal{B}_3^{(2)} = \mathcal{B}_3^{(4)} = \mathcal{B}_3^{(5)} = \emptyset$, $\mathcal{B}_3^{(3)} = \{\textbf{H}_{32}, \textbf{H}_{33}\}$; $\mathcal{B}_4^{(1)} = \mathcal{B}_4^{(5)} = \emptyset$, $\mathcal{B}_4^{(2)} = \{\textbf{H}_{41}\}$, $\mathcal{B}_4^{(3)} = \{\textbf{H}_{43}\}$, $\mathcal{B}_4^{(4)} = \{\textbf{H}_{44}\}$; $\mathcal{B}_5^{(1)} = \mathcal{B}_5^{(2)} = \mathcal{B}_5^{(3)} = \mathcal{B}_5^{(4)} = \emptyset$, $\mathcal{B}_5^{(5)} = \{\textbf{H}_{51}, \textbf{H}_{53}\}$; $\mathcal{B}_6^{(1)} = \mathcal{B}_6^{(2)} = \mathcal{B}_6^{(3)} = \mathcal{B}_6^{(4)} = \emptyset$, $\mathcal{B}_6^{(5)} = \{\textbf{H}_{61}, \textbf{H}_{62}\}$.
The index lists corresponding to the row and column partitions are given as: $\mathcal{R}_1 = \{1, 0, 1, 0\}$, $\mathcal{R}_2 = \{2, 0, 0, 4\}$, $\mathcal{R}_3 = \{0, 3, 3, 0\}$, $\mathcal{R}_4 = \{2, 0, 3, 4\}$, $\mathcal{R}_5 = \{5, 0, 5, 0\}$, $\mathcal{R}_6 = \{5, 5, 0, 0\}$, and $\mathcal{C}^1 = \{1, 2, 0, 2, 5, 5\}$, $\mathcal{C}^2 = \{0, 0, 3, 0, 0, 5\}$, $\mathcal{C}^3 = \{1, 0, 3, 3, 5, 0\}$, $\mathcal{C}^4 = \{0, 4, 0, 4, 0, 0\}$.
\begin{figure}[htb]
\centering
\begin{tikzpicture}
\node[] at (0,1.5) {$\textbf{H} = \left[\begin{matrix}
\textbf{H}_{11} & \textbf{0} & \textbf{H}_{13} & \textbf{0}\\
\textbf{H}_{21} & \textbf{0} & \textbf{0} & \textbf{H}_{24}\\
\textbf{0} & \textbf{H}_{32} & \textbf{H}_{33} & \textbf{0}\\
\textbf{H}_{41} & \textbf{0} & \textbf{H}_{43} & \textbf{H}_{44}\\
\textbf{H}_{51} & \textbf{0} & \textbf{H}_{53} & \textbf{0}\\
\textbf{H}_{61} & \textbf{H}_{62} & \textbf{0} & \textbf{0}
\end{matrix}\right]$};
\node[] at (0,-0.2) {a. Coefficient matrix};
\end{tikzpicture}
\begin{tikzpicture}
\node[] at (-1.5,2.8) {Agent $1$ knows $\textbf{H}_{11}, \textbf{H}_{13}$;};
\node[] at (-1.5,2.3) {Agent $2$ knows $\textbf{H}_{21}, \textbf{H}_{41}$;};
\node[] at (-1.15,1.8) {Agent $3$ knows $\textbf{H}_{32}, \textbf{H}_{33}, \textbf{H}_{43}$;};
\node[] at (-1.5,1.3) {Agent $4$ knows $\textbf{H}_{24}, \textbf{H}_{44}$;};
\node[] at (-0.75,0.8) {Agent $5$ knows $\textbf{H}_{51}, \textbf{H}_{53}, \textbf{H}_{61}, \textbf{H}_{62}$;};
\node[] at (-1.2,0.1) {b. Distribution of data};
\end{tikzpicture}
\caption{Example of a distributed linear algebraic equation over a network of $5$ agents.}\label{fig_example}
\end{figure}
The sets corresponding to these lists are: $\mathbb{S}(\mathcal{R}_1) = \{1\}$, $\mathbb{S}(\mathcal{R}_2) = \{2, 4\}$, $\mathbb{S}(\mathcal{R}_3) = \{3\}$, $\mathbb{S}(\mathcal{R}_4) = \{2, 3, 4\}$, $\mathbb{S}(\mathcal{R}_5) = \mathbb{S}(\mathcal{R}_6) = \{5\}$, and $\mathbb{S}(\mathcal{C}^1) = \{1, 2, 5\}$, $\mathbb{S}(\mathcal{C}^2) = \{3, 5\}$, $\mathbb{S}(\mathcal{C}^3) = \{1, 3, 5\}$, $\mathbb{S}(\mathcal{C}^4) = \{4\}$.
Consider the distributed linear algebraic equation in Fig. \ref{fig_example} and assume that the communication graph of the multiagent network is given in Fig. \ref{fig_graph_example}. It can be verified that Assumption \ref{aspt_topology} is satisfied.
We have the unknown vector $\textbf{z} = \textrm{col}\{\textbf{z}_1, \textbf{z}_2, \textbf{z}_3, \textbf{z}_4\}$ and the coefficient vector $\textbf{h} = \textrm{col}\{\textbf{h}_1, \textbf{h}_2, \textbf{h}_3, \textbf{h}_4, \textbf{h}_5, \textbf{h}_6\}$.
From agent-index sets corresponding to column partitions, we have $\bar{\textbf{z}}_1 = \textrm{col}\{\textbf{z}_1^{(1)}, \textbf{z}_3^{(1)}\}$, $\bar{\textbf{z}}_2 = \textbf{z}_1^{(2)}$, $\bar{\textbf{z}}_3 = \textrm{col}\{\textbf{z}_2^{(3)}, \textbf{z}_3^{(3)}\}$, $\bar{\textbf{z}}_4 = \textbf{z}_4^{(4)}$, $\bar{\textbf{z}}_5 = \textrm{col}\{\textbf{z}_1^{(5)}, \textbf{z}_2^{(5)}, \textbf{z}_3^{(5)}\}$.
The matrices corresponding in the coupled constraints in \eqref{eq_coupling_commonvariable} are $\textbf{P}_{12} = \textrm{col}\{\textbf{I}_{n_1}, \textbf{0}_{n_1 \times n_3}\}$, $\textbf{P}_{21} = \textbf{P}_{25} = \textbf{I}_{n_1}$, $\textbf{P}_{52} = \textrm{col}\{\textbf{I}_{n_1}, \textbf{0}_{n_1 \times (n_2+n_3)}\}$, $\textbf{P}_{13} = \textrm{col}\{\textbf{0}_{n_3 \times n_1}, \textbf{I}_{n_3}\}$, $\textbf{P}_{31} = \textrm{col}\{\textbf{0}_{n_3 \times n_2}, \textbf{I}_{n_3}\}$, $\textbf{P}_{35} = \textbf{I}_{n_2+n_3}$ and $\textbf{P}_{53} = \textrm{col}\{\textbf{0}_{(n_2+n_3) \times n_1}, \textbf{I}_{n_2+n_3}\}$.

For the row partition $k \in \{1, 3, 5, 6\}$, the set $\mathbb{S}(\mathcal{R}_k)$ has only one agent. Then we have $\bar{\textbf{A}}_{1,1} = \textrm{blkcol}\{\textbf{H}_{11}, \textbf{H}_{13}\}$, $\bar{\textbf{a}}_{1,1} = \textbf{h}_1$; $\bar{\textbf{A}}_{3,3} = \textrm{blkcol}\{\textbf{H}_{32}, \textbf{H}_{33}\}$, $\bar{\textbf{a}}_{3,3} = \textbf{h}_3$; $\bar{\textbf{A}}_{5,5} = \textrm{blkcol}\{\textbf{H}_{51}, \textbf{0}, \textbf{H}_{53}\}$, $\bar{\textbf{a}}_{5,5} = \textbf{h}_5$ and $\bar{\textbf{A}}_{5,6} = \textrm{blkcol}\{\textbf{H}_{61}, \textbf{H}_{62}, \textbf{0}\}$, $\bar{\textbf{a}}_{5,6} = \textbf{h}_6$.
The coupling row partition set is $\mathbb{M} = \{2, 4\}$. As $\mathbb{S}(\mathcal{R}_2) = \{2, 4\}$, two agents $2$ and $4$ know respectively $\bar{\textbf{B}}_{2,2}, \bar{\textbf{b}}_{2,2}$ and $\bar{\textbf{B}}_{4,2}, \bar{\textbf{b}}_{4,2}$ where $\bar{\textbf{B}}_{2,2} = \textbf{H}_{21}$, $\bar{\textbf{B}}_{4,2} = \textbf{H}_{24}$ and $\bar{\textbf{b}}_{2,2} + \bar{\textbf{b}}_{4,2} = \textbf{h}_2$. Agent $2$ has the local variables $\textbf{u}_2^{(2)}, \textbf{v}_{24,2}$ and agent $4$ has the local variables $\textbf{u}_2^{(4)}, \textbf{v}_{42,2}$.
Similarly, for the row partition $3$, agents $2$ knows $\textbf{u}_4^{(2)}, \textbf{v}_{24,4}, \bar{\textbf{B}}_{2,4}, \bar{\textbf{b}}_{2,4}$, agents $3$ knows $\textbf{u}_4^{(3)}, \textbf{v}_{34,4}, \bar{\textbf{B}}_{3,4}, \bar{\textbf{b}}_{3,4}$ and agents $4$ knows $\textbf{u}_4^{(4)}, \textbf{v}_{42,4}, \textbf{v}_{43,4}, \bar{\textbf{B}}_{4,4}, \bar{\textbf{b}}_{4,4}$. In which $\bar{\textbf{B}}_{2,4} = \textbf{H}_{41}$, $\bar{\textbf{B}}_{3,4} = \textrm{blkcol}\{\textbf{0}, \textbf{H}_{43}\}$, $\bar{\textbf{B}}_{4,4} = \textbf{H}_{44}$ and $\bar{\textbf{b}}_{2,4} + \bar{\textbf{b}}_{3,4} + \bar{\textbf{b}}_{4,4} = \textbf{h}_4$.

Finally, we summarize the local information for each agent as follows.
Agent $1$ knows $\bar{\textbf{z}}_1$, $\textbf{P}_{12}$, $\textbf{P}_{13}$, $\textbf{A}_1 = \bar{\textbf{A}}_{1,1}$ and $\textbf{a}_1 = \bar{\textbf{a}}_{1,1}$;
Agent $2$ knows $\bar{\textbf{z}}_2$, $\textbf{P}_{21}$, $\textbf{P}_{25}$, $\bar{\textbf{u}}_2 = \textrm{col}\{\textbf{u}_2^{(2)}, \textbf{u}_4^{(2)}\}$, $\bar{\textbf{v}}_2 = \textrm{col}\{\textbf{v}_{24,2}, \textbf{v}_{24,4}\}$, $\bar{\textbf{B}}_{2,2}$, $\bar{\textbf{b}}_{2,2}$, $\bar{\textbf{B}}_{2,4}$, $\bar{\textbf{b}}_{2,4}$;
Agent $3$ knows $\bar{\textbf{z}}_3$, $\textbf{A}_3 = \bar{\textbf{A}}_{3,3}$, $\textbf{a}_3 = \bar{\textbf{a}}_{3,3}$, $\textbf{P}_{31}$, $\textbf{P}_{35}$, $\bar{\textbf{u}}_3 = \textbf{u}_4^{(3)}$, $\bar{\textbf{v}}_3 = \textbf{v}_{34,2}$, $\bar{\textbf{B}}_{3,4}$, $\bar{\textbf{b}}_{3,4}$;
Agent $4$ knows $\bar{\textbf{z}}_4$, $\bar{\textbf{u}}_4 = \textrm{col}\{\textbf{u}_2^{(4)}, \textbf{u}_4^{(4)}\}$, $\bar{\textbf{v}}_4 = \textrm{col}\{\textbf{v}_{42,2}, \textbf{v}_{42,4}, \textbf{v}_{43,4}\}$, $\bar{\textbf{B}}_{4,2}$, $\bar{\textbf{b}}_{4,2}$, $\bar{\textbf{B}}_{4,4}$, $\bar{\textbf{b}}_{4,4}$;
Agent $5$ knows $\bar{\textbf{z}}_5$, $\textbf{P}_{52}$, $\textbf{P}_{53}$, $\textbf{A}_5 = \textrm{blkcol}\{\bar{\textbf{A}}_{5,5}, \bar{\textbf{A}}_{5,6}\}$ and $\textbf{a}_5 = \textrm{blkcol}\{\bar{\textbf{a}}_{5,5}, \bar{\textbf{a}}_{5,6}\}$.
\bibliographystyle{IEEEtran}
\bibliography{mylib}
\end{document}